\documentclass[a4paper,12pt]{article}
\usepackage[utf8]{inputenc}
\usepackage{amssymb,amsthm,amsmath}

\newtheorem{theorem}{Theorem}[section]
\newtheorem{definition}[theorem]{Definition}
\newtheorem{lemma}[theorem]{Lemma}

\newtheorem{corollary}[theorem]{Corollary}
\newtheorem{remark}[theorem]{Remark}
\newtheorem{example}[theorem]{Example}
\newtheorem{assumption}[theorem]{Assumption}

\begin{document}
\title{Skorohod's representation theorem and optimal strategies for markets with frictions}
\author{
	Huy N. Chau\thanks{Alfr\'{e}d R\'{e}nyi Institute of Mathematics, Hungary Hungarian Academy of Sciences, Re\'altanoda utca 13-15, Budapest, Hungary
		(\texttt{chau@renyi.hu},
		\texttt{rasonyi@renyi.hu}).} 
	\and
	Mikl\'{o}s R\'{a}sonyi\footnotemark[1]
}
\maketitle

\begin{abstract}
 We prove the existence of optimal strategies for agents with cumulative prospect theory preferences who trade
 in a continuous-time illiquid market, transcending known results which pertained only to 
 risk-averse utility maximizers. 
 The arguments exploit an extension of Skorohod's representation
 theorem for tight sequences of probability measures. This method is applicable in a number
 of similar optimization problems.
\end{abstract}

Keywords: 	Optimization, non-concave utility, Skorohod's representation, illiquidity, market frictions.

\section{Introduction}
Optimal investment for an agent with given preferences has always been a core topic in mathematical finance.
Classical papers on the subject (\cite{merton,samuelson}) as well as most subsequent studies
neglected the presence of market frictions such as transaction costs, taxes and liquidity effects,
and they also stuck to the paradigm of a concave utility function expressing risk-aversion of the agent.

Non-concave preferences involving distorted probabilities emerged over the time, \cite{kt,quiggin,tk},
and incorporating frictions in the model led to mathematical settings that are
different from the classical one, see e.g. \cite{ac} and 
Chapter 3 of \cite{ks}.

Our purpose in the present paper is to prove existence theorems for optimal strategies in a general, 
continuous-time setting, following the footsteps of \cite{KS99,ks_exp,sch,sara,sara2,owen,bouchard,campi-owen,cs,
	pen14}. 
Just like \cite{bouchard,campi-owen,cs,pen14}, we wish to treat markets with friction. 
The essential novelty is that our method allows preferences that correspond
to possibly non-concave utility functions and may involve distorted probabilities. 
Such problems seem to be intractable with the usual techniques of convex duality and
arguments involving convex combinations, \cite{KS99,ks_exp,sch}.
We propose a method for establishing the existence of optimizers based on an extension of Skorohod's famous 
representation theorem, see Theorem \ref{altalanos} and Remark \ref{worth} below. 

The approach we present works for
preferences of a very general form and for various financial models. Here we
confine ourselves to the illiquid market of \cite{gr} and to preferences in the spirit of cumulative
prospect theory (CPT), see \cite{kt,tk};
this setting illustrates the power of the method fairly well.  
Further extensions are left for future research.
We also point out that our method seems flexible enough for applications to e.g. 
model uncertainty where expected utility is maximized 
in the worst-case sense over a set of probabilities, see Remark \ref{rob} below.

Optimal investment with CPT preferences concentrated almost exclusively on frictionless markets:
\cite{bg,hz} treated one-step models and found rather precise conditions for the existence of
optimal portfolio. The papers \cite{cr,jose} considered multistep models and proved that there
are optimal strategies when the investor is allowed to use a randomization (which is independent of the market). 
The present paper is similar to \cite{cr,jose} in the sense that we also allow randomization, see Assumption \ref{ghj} below.

Most continuous-time studies 
assumed a complete market: \cite{bkp,cp,reichlin} considered nonconcave utilities but no probability distortions; in \cite{jz} explicit solutions
were obtained under suitable assumptions, see also \cite{cd}; \cite{cd11} considered informational aspects of the problem while
\cite{rr,rrr} investigated well-posedness.  
Only a narrow class of 
incomplete markets have
been treated so far, \cite{reichlinthesis,rr,jose1}, using ad hoc techniques. Further problems of optimal control within CPT were treated in
\cite{jin2013greed,chang2015optimal,he2016path,he2015optimal} but these are connected to our setting only remotely.

We are aware of only
\cite{teemu} that treats markets with frictions and agents with nonconcave preferences. That paper
established a fairly general dynamic programming principle in a discrete time setting without 
probability distortions which is applicable to optimization problems in a wide range of market models.
The present paper seems to be the first continuous-time study involving CPT preferences and
market frictions at the same time.

In Section \ref{tetto} we present an extension of Skorohod's representation theorem from \cite{jakubowski}
and verify that it applies to our setting. In Section \ref{harom} we present a model of an
illiquid market. In Section \ref{ins}
we use the representation theorem of Section \ref{tetto} to construct optimal strategies in investment 
problems under liquidity constraints. Section \ref{j} sketches an alternative formulation for our results.
Finally, Section \ref{negy} collects some useful lemmas.

\section{A representation theorem}\label{tetto}

For a random variable $X$ on some probability space we denote by $\mathrm{Law}(X)$ its law. When
there might be an ambiguity about the probability space we use the notation $\mathrm{Law}_Q(X)$
for the law of $X$ under the probability $Q$.

We denote by $\mathcal{B}(Z)$ the Borel-field of a topological space $Z$.
A sequence of probabilities $\mu_k$, $k\in\mathbb{N}$ on $\mathcal{B}(Z)$ is said to be \emph{tight} if, 
for all $\varepsilon>0$,
there is a compact $K(\varepsilon)\subset Z$ such that, for all $k$, $\mu_k(Z\setminus K(\varepsilon))<
\varepsilon$. We first recall a remarkable result from \cite{jakubowski}.

\begin{theorem}\label{altalanos}
	Let $Z$ be a topological space such that there is a countable collection $f_i$, $i\in\mathbb{N}$
	of continuous, real-valued functions which separate points on $Z$. Let $\mu_k$, $k\in\mathbb{N}$
	be a tight sequence of measures on $\mathcal{B}(Z)$. Then there is a subsequence $k_j$, $j\in\mathbb{N}$ 
	and a probability space on which there exist $Z$-valued random variables $\xi$, $\xi_j$, 
	with $\mathrm{Law}(\xi_j)=
	\mu_{k_j}$, $j\in\mathbb{N}$ and $\xi_j\to\xi$ a.s., $j\to\infty$. 
	\hfill $\Box$
\end{theorem}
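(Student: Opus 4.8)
The plan is to transport the whole problem to a Polish space by means of the separating functions, solve it there with the classical Skorohod representation theorem, and then pull the solution back to $Z$. First I would observe that the existence of the countable separating family $\{f_i\}$ forces $Z$ to be Hausdorff, and that the map $\Phi = (f_i)_{i\in\mathbb{N}} : Z \to \mathbb{R}^{\mathbb{N}}$ is continuous and injective, where $\mathbb{R}^{\mathbb{N}}$ is given the product topology and is a Polish space. I would then push the measures forward, setting $\nu_k = \mu_k \circ \Phi^{-1}$. Since $\Phi$ sends compacts to compacts and is injective, the tightness of $(\mu_k)$ transfers immediately to $(\nu_k)$: if $\mu_k(Z\setminus K) < \varepsilon$ for all $k$ with $K$ compact, then $\nu_k(\mathbb{R}^{\mathbb{N}}\setminus\Phi(K)) < \varepsilon$ for all $k$, because $\Phi^{-1}(\Phi(K)) = K$.

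With $(\nu_k)$ tight on a Polish space, Prokhorov's theorem yields a subsequence $(k_j)$ and a probability $\nu$ with $\nu_{k_j} \to \nu$ weakly, and the classical Skorohod theorem on $\mathbb{R}^{\mathbb{N}}$ then supplies a probability space carrying $\mathbb{R}^{\mathbb{N}}$-valued variables $\eta_j,\eta$ with $\mathrm{Law}(\eta_j)=\nu_{k_j}$, $\mathrm{Law}(\eta)=\nu$, and $\eta_j \to \eta$ almost surely in the product topology. It remains to manufacture $Z$-valued variables from these. Choosing increasing tightness compacts $K_n\uparrow$ with $\sup_k \mu_k(Z\setminus K_n)\to 0$, the Portmanteau theorem applied to the compact (hence closed) sets $\Phi(K_n)$ gives $\nu(\Phi(K_n))\ge \limsup_j \nu_{k_j}(\Phi(K_n))$, so $\nu$ concentrates on $\bigcup_n \Phi(K_n)$, and each $\nu_{k_j}$ trivially does as well; hence $\eta_j,\eta\in\bigcup_n\Phi(K_n)$ a.s. On each $K_n$ the continuous injection $\Phi$ is a homeomorphism onto $\Phi(K_n)$ (a continuous bijection from a compact space to a Hausdorff one), so each $K_n$ is in fact compact metrizable and $\Phi^{-1}$ is Borel on the $\sigma$-compact set $\bigcup_n\Phi(K_n)$. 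Setting $\xi_j=\Phi^{-1}(\eta_j)$ and $\xi=\Phi^{-1}(\eta)$ then produces $Z$-valued variables, and a Lusin--Souslin argument (on each Polish $K_n$, $\Phi$ is a Borel isomorphism onto its image) confirms $\mathrm{Law}(\xi_j)=\mu_{k_j}$ via $(\Phi^{-1})_*\Phi_*\mu_{k_j}=\mu_{k_j}$.

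The hard part, and the genuinely non-metric feature of the statement, is upgrading $\eta_j\to\eta$ in the product topology to $\xi_j\to\xi$ in the original, possibly strictly finer, topology of $Z$. This does not follow formally: coordinatewise convergence $f_i(\xi_j)\to f_i(\xi)$ for every $i$ is in general strictly weaker than convergence in $Z$, as sequences escaping every compact set show (for instance $j e_j \to 0$ coordinatewise but not weakly in a Hilbert space carrying its weak topology). The two topologies do agree on each $K_n$, precisely because $\Phi|_{K_n}$ is a homeomorphism, so the real content is to \emph{confine the sample paths}: one must show that for almost every $\omega$ there is a single $n$ with $\xi_j(\omega),\xi(\omega)\in K_n$ for all large $j$, after which convergence within $K_n$ is automatic. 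This confinement is not free, since $\sup_k\mu_k(Z\setminus K_n)$ bounds each $\nu_{k_j}(\mathbb{R}^{\mathbb{N}}\setminus\Phi(K_n))$ only uniformly in $j$, so a Borel--Cantelli argument across $j$ is unavailable; securing it is where tightness must be exploited in an essential, quantitative way rather than merely to locate the limit, for instance by building the Skorohod coupling so that the levels $\min\{n:\eta_j\in\Phi(K_n)\}$ are tied to those of the limit. I expect this to be the main technical obstacle, and the one point where the representation must be constructed by hand rather than invoked as a black box.
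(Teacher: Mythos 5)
The paper does not prove this statement at all: it is quoted verbatim from Jakubowski's article \cite{jakubowski} and stamped with a $\Box$, so there is no internal proof to match yours against. Judged on its own terms, your attempt sets up the right framework --- the continuous injection $\Phi=(f_i)_{i\in\mathbb{N}}:Z\to\mathbb{R}^{\mathbb{N}}$, the transfer of tightness to the pushforwards $\nu_k=\mu_k\circ\Phi^{-1}$, the observation that $\Phi$ restricted to each tightness compact $K_n$ is a homeomorphism onto a compact metrizable set, and the concentration of all measures on the $\sigma$-compact set $\bigcup_n\Phi(K_n)$. These are indeed the opening moves of Jakubowski's argument, and your diagnosis of where the difficulty lies is exactly right.

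But the proof is not complete, and the missing step is the entire content of the theorem. After invoking Prokhorov and the classical Skorohod theorem on $\mathbb{R}^{\mathbb{N}}$ as a black box, you obtain $\eta_j\to\eta$ a.s.\ only in the product topology, i.e.\ $f_i(\xi_j)\to f_i(\xi)$ for each $i$, which (as your own example $j e_j\to 0$ coordinatewise but not weakly shows) does not imply $\xi_j\to\xi$ in $Z$. The needed confinement --- that for a.e.\ $\omega$ some single $n$ has $\xi(\omega)\in K_n$ and $\xi_j(\omega)\in K_n$ for all large $j$ --- cannot be extracted from the black-box coupling: tightness only gives the bound $\nu_{k_j}(\mathbb{R}^{\mathbb{N}}\setminus\Phi(K_n))\le\varepsilon_n$ uniformly in $j$, which is useless for a Borel--Cantelli argument over $j$, and you stop at precisely this point with a sketch of what ``one must show.'' The known repair is to discard the black box entirely: by a diagonal argument choose the subsequence $k_j$ so that for every $n$ the restricted measures $\nu_{k_j}(\,\cdot\,\cap\Phi(K_n))$ converge weakly, and then construct $\xi$ and all the $\xi_j$ simultaneously on $([0,1],Leb)$ by a layered, quantile-type coupling in which the layer index $\min\{n:\xi_j(\omega)\in K_n\}$ is tied to that of $\xi(\omega)$; only then does the a.s.\ convergence reduce to the metrizable compacts $K_n$, where the classical theorem applies. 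Until that construction is supplied, the theorem is asserted rather than proved; since the paper itself imports the result from \cite{jakubowski}, the honest course is either to do the same or to carry out the layered construction in full.
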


\begin{lemma}\label{fanta}
	Let $Z$ be a regular Hausdorff topological space such that there is an increasing sequence $A_n$,
	$n\in\mathbb{N}$ of closed subspaces of $Z$ which are separable metric spaces (under a suitable metric) and 
	$Z=\cup_{n\in\mathbb{N}} A_n$.
	Then there is a countable collection of continuous, real-valued functions which separate points on $Z$.
\end{lemma}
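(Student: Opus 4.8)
The plan is to reduce the statement to the Tietze extension theorem, the crucial preliminary being to upgrade the regularity of $Z$ to normality. First I would observe that each $A_n$, being a separable metric space, is second countable and hence Lindel\"of in its own topology. Since $Z=\cup_n A_n$ is a countable union of Lindel\"of subspaces, $Z$ itself is Lindel\"of: any open cover of $Z$ restricts to an open cover of each $A_n$, which admits a countable subcover, and the (countable) union of these subcovers covers $Z$. A regular Lindel\"of space is normal --- given two disjoint closed sets, regularity produces neighbourhoods whose closures miss the opposite set, the Lindel\"of property extracts countable subcovers of each closed set, and the standard shrinking argument replacing the $n$-th neighbourhood $U_n$ by $U_n\setminus\bigcup_{k\le n}\overline{V_k}$ (and symmetrically) yields disjoint open separators. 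Hence $Z$ is normal.

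Next, on each $A_n$ I would produce a countable family of continuous functions separating its points. Fixing a metric $\rho_n$ that induces the (subspace) topology of $A_n$ and a countable dense set $\{q^n_m\}_{m\in\mathbb{N}}$, I set $g^n_m(x)=\min(\rho_n(x,q^n_m),1)$ for $x\in A_n$. These are continuous, $[0,1]$-valued, and separate the points of $A_n$: if $x\neq y$ in $A_n$, then $\rho_n(x,y)>0$, and choosing $q^n_m$ close enough to $x$ makes $g^n_m(x)$ small while the triangle inequality keeps $g^n_m(y)$ bounded away from zero, so $g^n_m(x)\neq g^n_m(y)$. Because $A_n$ is closed in the normal space $Z$, the Tietze extension theorem furnishes continuous extensions $\tilde g^n_m:Z\to[0,1]$ with $\tilde g^n_m|_{A_n}=g^n_m$.

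Finally I would check that the countable collection $\{\tilde g^n_m:n,m\in\mathbb{N}\}$ separates points of $Z$. Given $x\neq y$ in $Z$, since the $A_n$ are increasing and exhaust $Z$ there is an $N$ with $x,y\in A_N$; then some $g^N_m$ separates $x$ and $y$ inside $A_N$, whence $\tilde g^N_m(x)=g^N_m(x)\neq g^N_m(y)=\tilde g^N_m(y)$. This is exactly the required separating family, and it is countable as a countable union over $n$ of countable families.

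The main obstacle is the step from regularity to normality: the Tietze extension theorem can fail on merely regular spaces, so without the Lindel\"of observation there would be no guarantee that the separating functions on the closed pieces $A_n$ extend continuously to all of $Z$. The increasing, exhaustive structure of the $A_n$ plays the complementary role, ensuring that every pair of distinct points already lies in a single metric piece and thereby reducing the global separation problem to countably many local, metric ones.
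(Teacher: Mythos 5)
Your proof is correct and follows essentially the same route as the paper's: establish that $Z$ is Lindel\"of (hence, being regular, normal), separate points on each $A_n$ by a countable family of continuous functions, and extend these to $Z$ by Tietze's theorem. You merely fill in details the paper leaves implicit, such as the explicit construction of the separating functions on $A_n$ via distances to a countable dense set.
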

\begin{proof}
	Each $A_n$ is Lindel\"of hence so is $Z$. A Lindel\"of regular space is normal, so $Z$ is also a normal
	Hausdorff space. For each $n$,
	there is clearly a sequence $f^n_i$, $i\in\mathbb{N}$ of continuous real-valued functions on $A_n$
	which separate points on $A_n$. These can be extended in a continuous
	way to $Z$ by Tietze's theorem, for all $n$, $i$. Then the countable collection of extended functions 
	$f^n_i$, $i,n\in\mathbb{N}$ separates points on $Z$. 
\end{proof}

\begin{corollary}\label{fonti}
	Let $\mathbb{B}$ be a separable Banach space with
	dual $\mathbb{B}'$ equipped with the weak-star topology and let $M$ be a separable metric space.
	Then $Z:=\mathbb{B}'\times M$ satisfies the hypotheses of Lemma \ref{fanta}.
\end{corollary}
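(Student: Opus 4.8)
The plan is to supply exactly the data demanded by Lemma \ref{fanta}: that $Z$ is regular Hausdorff, and that it admits an increasing exhaustion by closed subspaces each of which is separable metric. I would first dispose of the separation axioms. The weak-star topology on $\mathbb{B}'$ is the initial topology generated by the evaluation maps $x'\mapsto x'(x)$, $x\in\mathbb{B}$; these separate the points of $\mathbb{B}'$, so $\mathbb{B}'$ is Hausdorff, and being the topology of a (locally convex) topological vector space it is automatically regular. Since $M$ is a metric space it too is regular and Hausdorff, and both properties are preserved under finite products, so $Z=\mathbb{B}'\times M$ is regular Hausdorff.

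Next I would construct the exhaustion on the dual side. Put $A'_n:=\{x'\in\mathbb{B}':\|x'\|\le n\}$. By the Banach--Alaoglu theorem each $A'_n$ is weak-star compact, hence weak-star closed in $\mathbb{B}'$; the sequence is clearly increasing with $\cup_{n}A'_n=\mathbb{B}'$. The decisive point is that the weak-star topology restricted to each bounded set $A'_n$ is metrizable: picking a dense sequence $(x_m)_{m\in\mathbb{N}}$ in the unit ball of the separable space $\mathbb{B}$, the metric $d(x',y'):=\sum_{m}2^{-m}\min\{1,|x'(x_m)-y'(x_m)|\}$ induces the weak-star topology on $A'_n$. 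A weak-star compact metrizable space is, in particular, a separable metric space.

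I would then set $A_n:=A'_n\times M$. Each $A_n$ is closed in $Z$ as a product of closed sets, the sequence is increasing, and $\cup_{n}A_n=\mathbb{B}'\times M=Z$ because $\cup_{n}A'_n=\mathbb{B}'$. The subspace topology that $A_n$ inherits from the product $Z$ coincides with the product of the subspace topologies on $A'_n$ and $M$; since each factor is a separable metric space, $A_n$ is a separable metric space in the product metric. This verifies every hypothesis of Lemma \ref{fanta}, and the conclusion follows at once.

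The only genuinely non-formal ingredient is the metrizability of bounded weak-star sets, which is precisely where separability of $\mathbb{B}$ is used; all remaining steps are routine verifications that the standard separation and exhaustion properties survive the passage to the product.
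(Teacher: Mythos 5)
Your proposal is correct and follows essentially the same route as the paper's own proof: regularity and Hausdorffness from the topological vector space and metric space structures, and the exhaustion $A_n=\{\|x'\|'\le n\}\times M$ using the Banach--Alaoglu metrizability and compactness of closed dual balls when $\mathbb{B}$ is separable. The only difference is that you spell out the standard details (the explicit metric on bounded sets, closedness via compactness) that the paper takes for granted.
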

\begin{proof}
	Indeed, topological vector spaces and metric spaces 
	are both regular; $\mathbb{B}'$ as well as $M$ are clearly Hausdorff. So the product $Z$ is regular Hausdorff.
	Denote by $||\cdot||'$ the norm of $\mathbb{B}'$ and set
	$B_n:=\{x\in\mathbb{B}':\ ||x||'\leq n\}$.
	In the weak-star topology, any closed ball in the dual of a separable Banach space is metrizable
	and compact, so $A_n:=B_n\times M$ is a separable
	metric space, closed in the relative topology of $Z$, for each $n$.
\end{proof}

\begin{example}\label{lp} {\rm Fix $1<\beta<\infty$. Let $\gamma$ be defined by $1/\beta+1/\gamma=1$.
		Let $L^{\beta}:=L^{\beta}([0,1],\mathcal{B}([0,1]),Leb)$ denote the usual Banach space of 
		(equivalence classes of) $\beta$-integrable 
		functions on the unit interval. Let $T$ be $L^{\beta}$ equipped with the weak topology. 
		$L^{\beta}$ is the dual of the separable Banach space $L^{\gamma}$ and the weak topology on 
		$L^{\beta}$
		is precisely the weak-star topology in the duality $(L^{\gamma},L^{\beta})$. For any 
		separable metric space $M$, Theorem \ref{altalanos} applies to $Z:=T\times M$,
		by Lemma \ref{fanta} and Corollary \ref{fonti}.
		
		This topological space $Z$ 
		will be crucial in our study of optimal investment
		in illiquid markets as strategies will be represented by random elements in $T$ and
		a certain $M$ will code the information structure of the market,
		see Section \ref{harom} for details.}
\end{example}

\begin{example}\label{m}
	{\rm Consider $C[0,1]$, the separable Banach space
		of continuous functions on the unit interval (with the supremum norm).
		Let $\mathcal{M}$ denote the Banach space of finite signed measures on $\mathcal{B}([0,1])$,
		the dual space of $C[0,1]$.
		We take $T$ to be $\mathcal{M}$ equipped with the weak-star topology. 
		Again, Theorem \ref{altalanos} applies to $Z:=T\times M$ for any separable metric space $M$.
		
		The space $Z$ can be used in the treatment of optimal investment under transaction costs 
		where strategies correspond to random elements in $T$ and the price process
		is assumed continuous (i.e. it is a random element in $C[0,1]$). Due to the numerous technicalities,
		details will not be presented here.}
\end{example}

\begin{remark}
	{\rm For the moment, the case of frictionless markets is not accessible with our methods as 
		the construction of stochastic integrals is carried out in a filtration-dependent way and cannot be performed
		pathwise.} 
\end{remark}

\begin{remark}
	{\rm Note that neither of the spaces in Examples \ref{lp}, \ref{m} is metrizable so the well-known
		versions of Skorohod's representation theorem (see e.g. Lemma
		4.30 in \cite{k}) are not applicable to them. We also point out that topological spaces
		with a Skorohod representation property behave delicately: they are not known to be
		closed for topological products; counterexamples show that, even for a weakly convergent sequence
		of probabilities, Skorohod representation may only work for a subsequence, etc.
		We refer the interested reader to \cite{survey} for details.}
\end{remark}

\section{A model of an illiquid market}\label{harom}

We now recall a simple version of the market model in \cite{gr} where security prices depend on the trading 
speed. In that model, price impact is 
assumed \emph{instantaneous} (the activities of the small agent in consideration do not move prices permanently) 
and \emph{superlinear}, see Assumption \ref{modl} below. Superlinearity is
in accordance with empirical studies, see e.g. \cite{rama}.

We will assume throughout the paper that trading takes place continuously
in the time interval $[0,1]$. Let $(\Omega, \mathcal{F}, (\mathcal{F}_t)_{t\in [0,1]}, P)$ be a filtered 
probability space, where the filtration is complete and right continuous, $\mathcal{F}_0$ is trivial.
A \emph{process} $\psi$ on this space is an $\mathcal{F}\otimes \mathcal{B}([0,1])$-measurable
mapping on $\Omega\times [0,1]$. The notation $EX$ will refer to the expectation of the 
random variable $X$. If there is ambiguity about the probability space then $E_QX$ will denote
the expectation of $X$ under the probability $Q$. We denote by $1_A$ the indicator of a set $A$.

In the sequel we will need that the filtration is of a specific type and that the probability space 
is large enough.

\begin{assumption}\label{u}
	There exists a c\`adl\`ag $\mathbb{R}^m$-valued process $Y$
	with independent increments such that $\mathcal{F}_t$ is the $P$-completion of $\sigma(Y_u,\, 0\leq u\leq t)$,
	for $t\in [0,1]$.
\end{assumption}

\begin{assumption}\label{ghj}
	There exists a random variable $U$ that is uniformly distributed on $[0,1]$ and independent
	of $\mathcal{F}_1$. 
\end{assumption}

For $m\in\mathbb{N}$, we denote by $\mathcal{D}^m$ the space of $\mathbb{R}^m$-valued right-continuous 
functions with left-hand limits
on $[0,1]$, equipped with Skorohod's topology, see Chapter 3 of \cite{billingsley}.

\begin{remark}\label{ipszilon}{\rm The Borel-field of $\mathcal{D}^m$ is generated by the coordinate 
		mappings $x\in\mathcal{D}^m\to x(t)\in\mathbb{R}^m$, $t\in [0,1]$, see Theorem 12.5 of 
		\cite{billingsley}. It follows that the function $\omega\in \Omega\to Y(\omega)\in\mathcal{D}^m$
		is a random variable and so is $\omega\in \Omega\to ^t Y(\omega)\in\mathcal{D}^m$, for
		all $t\in [0,1]$, where $^t Y$ is the process defined as
		$(^t Y)_u=Y_u 1_{[0,t)}+Y_t 1_{[t,1]}$, $u\in [0,1]$. 
		Furthermore, $\mathcal{F}_t=\sigma(Y_s,\, s\leq t)=\sigma(^t Y)$, for all $t\in [0,1]$.}
\end{remark}

Let us define the augmented filtration $\mathcal{G}_t:=\mathcal{F}_t\vee \sigma(U)$. 
Standard arguments (like Lemma 4.9 of \cite{hasan}) imply that $\mathcal{G}_t$, $t\in [0,1]$ also satisfies
the usual hypotheses of completeness and right-continuity.

The market consists of a riskless asset 
$S^0$ with price $S^0_t = 1$ for all $t \in [0,1]$ (``money account'') 
and a risky asset whose price $S$ is assumed to be an $\mathbb{R}$-valued c\`adl\`ag 
adapted process. (The extension of our results is straightforward to
the case of multiple risky assets.)

\begin{lemma}\label{measurability}
	There exists a measurable function $f : \mathcal{D}^m \to \mathcal{D}^1$ such that
	$S=f(Y)$. Furthermore, $^t S$ is measurable with respect to $\sigma(^t Y)$,
	for all $t\in [0,1]$, where $^t S$ is the process defined as
	$(^t S)_u=S_u 1_{[0,t)}+S_t 1_{[t,1]}$, $u\in [0,1]$.
\end{lemma}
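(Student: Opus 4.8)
The plan is to establish the two claims of Lemma~\ref{measurability} separately, relying on the fact (Assumption~\ref{u} together with Remark~\ref{ipszilon}) that $S$ is an $\mathcal{F}_1$-measurable $\mathcal{D}^1$-valued random variable and that $\mathcal{F}_1$ is generated by $Y$ viewed as a $\mathcal{D}^m$-valued random variable. For the first claim I would invoke the Doob--Dynkin lemma (the factorization lemma for measurable maps): since $\mathcal{D}^m$ and $\mathcal{D}^1$ are Polish spaces under Skorohod's topology, and the path map $\omega\mapsto S(\omega)\in\mathcal{D}^1$ is $\sigma(Y)$-measurable by Remark~\ref{ipszilon}, there exists a measurable $f:\mathcal{D}^m\to\mathcal{D}^1$ with $S=f(Y)$ $P$-a.s. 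The one subtlety is that the completion by $P$-null sets enlarges $\sigma(Y)$ to $\mathcal{F}_1$, so strictly speaking the factorization holds modulo a null set; this is handled by first factoring through the uncompleted $\sigma(Y)$ after redefining $S$ on a null set, which does not affect the conclusion.

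For the second claim I would argue as follows. By Remark~\ref{ipszilon} we have $\mathcal{F}_t=\sigma({}^tY)$, so it suffices to show that ${}^tS$ is $\mathcal{F}_t$-measurable, i.e. $\sigma({}^tY)$-measurable. Now $S$ is adapted, so each coordinate $S_u$ is $\mathcal{F}_u$-measurable; for $u\le t$ this gives $\mathcal{F}_u\subseteq\mathcal{F}_t$, hence $S_u$ is $\mathcal{F}_t$-measurable, and the frozen value $S_t$ is trivially $\mathcal{F}_t$-measurable. Since the Borel-field of $\mathcal{D}^1$ is generated by the coordinate maps $x\mapsto x(u)$ (again Theorem~12.5 of \cite{billingsley}), and each coordinate of the stopped path ${}^tS$ equals either $S_u$ (for $u<t$) or $S_t$ (for $u\ge t$), every coordinate of ${}^tS$ is $\mathcal{F}_t$-measurable. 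Generating $\mathcal{B}(\mathcal{D}^1)$ from these coordinates then yields that the whole map $\omega\mapsto{}^tS(\omega)$ is $\mathcal{F}_t=\sigma({}^tY)$-measurable.

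I expect the main obstacle to be verifying measurability of the stopping/freezing operation at the level of the path space rather than coordinate by coordinate: one must check that $x\mapsto{}^tx$ is a Borel map $\mathcal{D}^1\to\mathcal{D}^1$ (or at least that the composite $\omega\mapsto{}^tS(\omega)$ is measurable), because the Skorohod topology interacts awkwardly with truncation at a fixed time --- the stopped path ${}^tx$ can have a jump discontinuity introduced at $t$ even when $x$ is continuous there. The clean way around this is precisely to avoid treating ${}^t(\cdot)$ as a continuous map and instead reduce everything to the coordinate projections, using that $\mathcal{B}(\mathcal{D}^1)$ is generated by them; then the argument becomes a routine verification that a map into a space with a countably generated $\sigma$-field is measurable as soon as each generating coordinate is. A secondary technical point worth a sentence is the role of completeness: because $\mathcal{F}_t$ is the $P$-completion of $\sigma({}^tY)$, the measurability of ${}^tS$ is with respect to the completed field, which is all that is needed in the sequel.
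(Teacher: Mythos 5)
Your proposal is correct and follows essentially the same route as the paper: the first claim via the Doob--Dynkin factorization lemma applied to the $\sigma(Y)$-measurable $\mathcal{D}^1$-valued random variable $S$, and the second claim via the coordinate-projection argument of Remark \ref{ipszilon} (Theorem 12.5 of \cite{billingsley}), which is exactly how one avoids the non-continuity of the freezing map $x\mapsto{}^tx$ in the Skorohod topology. Your explicit handling of the completion subtlety is a detail the paper's proof leaves implicit, but it does not change the argument.
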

\begin{proof} As $S$ is a c\`adl\`ag process, it is a $\mathcal{D}^1$-valued random variable,
	by the argument of Remark \ref{ipszilon}. The first statement now follows from Doob's lemma 
	(Lemma 1.13 of \cite{k}) since $S$ is $\sigma(Y)$-measurable.
	The second statement also follows as in Remark \ref{ipszilon}.
\end{proof}

\begin{definition}\label{defi_strategy}
	A feasible strategy is a process $\phi:\Omega\times \mathbb{R}_+\to\mathbb{R}$ such that 
	it is progressively measurable with respect to 
	$\mathcal{G}_t$, $t\in [0,1]$ and  
	$$
	\int\limits_0^1 {|\phi_t|dt} < +\infty,\mbox{ a.s.}
	$$
	We denote by $\mathcal{A}$ the set of all 
	feasible strategies. 
\end{definition}

\begin{remark}
	{\rm We indicate that Definition \ref{defi_strategy} slightly deviates from the corresponding 
		Definition 2.1 in \cite{gr}. In that
		paper strategies are assumed optional while here we only require progressive measurability. The latter class fits
		better the purposes of the present paper and the proofs of all the results we cite from \cite{gr} 
		(Lemma 3.4 and Theorem 5.1)
		go through without any modifications for the class of progressively measurable processes as well.
		
		The process $\phi$ represents the \emph{trading rate}. Assume that the initial positions in the money account and in the stock are
		$z^0$, $z^1$, respectively. For each $\phi \in \mathcal{A}$, we may define by
		$$\varphi_t := z^1+\int_0^t{\phi_u\,du}, \qquad t \in [0,1],$$
		the number of risky assets in the portfolio at time $t$.  
		
		If there were no liquidity effects, the self-financing
		condition would imply that the change of the portfolio value over $[0,t]$ is $\int_0^t \varphi_u dS_u$ (implicitly
		assuming that $S$ is a semimartingale). As the value of the stock position at $t$ is $\varphi_t S_t$,
		a heuristic integration by parts gives that the value at $t$ of the money account is 
		\begin{eqnarray}\nonumber z^0+\int_0^t \varphi_u dS_u-\phi_t S_t &=&\\
		\label{faym}
		z^0-\int_0^t S_u\, d\varphi_u &=& z^0-\int_0^t \phi_u S_u\, du. 
		\end{eqnarray}
		Notice that the last expression makes
		mathematical sense for any $\phi\in\mathcal{A}$ and for any c\`adl\`ag $S$.}
\end{remark}

We now add liquidity effects to our model by a function $G$ in such a way that $G_t(x)$
represents the ``penalty'' for trading at speed $x$ at time $t$. 
\begin{assumption}\label{modl}
	There is $\alpha>1$ and a continuous function $H:\mathbb{R}\to\mathbb{R}$, such that 
	$G_t(x)=g(S_t,x)$ with
	\begin{equation}\label{bg}
	g(s,x)=H(s)|x|^{\alpha}
	\end{equation}
	and $\inf_{t\in [0,T]} H(S_t)>0$ a.s.
	Furthermore, fix $1 < \beta < \alpha$ and assume 
	\begin{equation}\label{condition_H}
	{E} \int_0^1{H^{\beta/(\beta - \alpha)}(S_t)(1 + |S_t|)^{\beta \alpha / (\alpha - \beta)}}\, dt < \infty.
	\end{equation}
\end{assumption}

\begin{remark}
	{\rm Typical specifications are $G_t(x)=\lambda |x|^{\alpha}$ or $G_t(x)=\lambda S_t |x|^{\alpha}$ with some $\alpha>1$,
		$\lambda>0$, see e.g. \cite{doso}. The first one satisfies Assumption \ref{modl} whenever 
		$\int_0^T E|S_t|^{\beta \alpha / (\alpha - \beta)}\, dt < \infty$, the second one whenever $S$ is positive, has continuous 
		trajectories and $\int_0^T [E|S_t|^{\beta (\alpha-1) / (\alpha - \beta)}+E|S_t|^{-\beta / (\alpha - \beta)}]\, dt < \infty$.
		It would be possible to substantially relax both \eqref{bg} and \eqref{condition_H} and to allow dependence of $H$ on the whole trajectory of $S$ but this would lead to complications without enhancing the message of our paper, so
		we refrain from seeking greater generality. 
	}
\end{remark}

\begin{definition}
	For a given strategy $\phi \in \mathcal{A}$ and an initial position $z \in \mathbb{R}^2$, 
	the positions at time $t \in [0,T]$ in the risky and riskless asset are defined as
	\begin{align}
	\tilde{X}_t(\phi) &:= z^1 + \int_0^t{\phi_udu},  \nonumber\\
	X_t(\phi) &:= z^0 - \int_0^t{\phi_uS_udu} - \int_0^t{G_u(\phi_u)du}, \label{X^0}
	\end{align}
	respectively (compare to \eqref{faym} above). 
\end{definition}

Note that $X_t(\phi)$ may take the value $-\infty$.
For simplicity, we assume from now on that $z^0=z^1=0$, the case 
of nonzero initial
positions is easily incorporated into the present setting.

Let $G^*$ be the Fenchel-Legendre conjugate of $G$,
\begin{equation}\label{csincsin}
G^*_t(y) := \sup_{x \in \mathbb{R}} (xy-G_t(x))=
\frac{\alpha-1}{\alpha}\alpha^{1/(1-\alpha)}H^{1/(1-\alpha)}(S_t)|y|^{\alpha/(\alpha-1)},
\end{equation}
as an elementary calculation shows. From (\ref{X^0}), under Assumption \ref{modl} one has 
\begin{equation}\label{market_bound}
X_1(\phi) \le B:=\int_0^1{G^*_t(-S_t)dt} < \infty,\mbox{ a.s.},
\end{equation}
see Lemma 3.1 of \cite{gr}.
We call $B$ the \emph{market bound} as it dominates the terminal money account position of 
any feasible portfolio.

\section{Optimal investments}\label{ins}

For $x\in\mathbb{R}$ we denote $x^+:=\max\{x,0\}$, $x^-:=\max\{-x,0\}$.
Let $u_+,u_-: \mathbb{R}_+ \to \mathbb{R}_+$ be continuous, increasing functions such that $u_{\pm}(0)=0$.
Let $w_+,w_-: [0,1]\to [0,1]$ be continuous with $w_{\pm}(0)=0$, $w_{\pm}(1)=1$. Functions
$u_{\pm}$ express the agent's attitude towards gains and losses while $w_{\pm}$ are
functions distorting the probabilities of events, see \cite{tk}, \cite{cr}.

We define, for any random variable $X\geq 0$,
\begin{eqnarray}\nonumber
V_+(X):=\int_0^{\infty} w_+\left(P\left(
u_+\left(X\right)\geq y
\right)\right)dy,
\end{eqnarray}
and
\begin{eqnarray}\nonumber
V_-(X):=\int_0^{\infty} w_-\left(P\left(
u_-\left(X\right)\geq y
\right)\right)dy.
\end{eqnarray}
For each real-valued random variable $X$ with $V_-(X^-)<\infty$ we set 
\begin{equation}\nonumber
V(X):=V_+(X^+)-V_-(X^-). 
\end{equation}

Let $W$ be an $\mathcal{F}_1$-measurable random variable representing a benchmark for the agent 
in consideration. For example, $W$ can be the value of an index
or of the portfolio of a rival at time $1$ which serves as a reference point for our investor.
The quantity $V(X-W)$ expresses the satisfaction of an agent with CPT preferences when (s)he
receives a random amount $X$, see \cite{jz,cr} for more detailed discussions. 
Positive $X-W$ means outperforming a benchmark, negative $X-W$
means falling short of it. Doob's theorem implies that there is a measurable 
$\ell:\mathcal{D}^m\to \mathbb{R}$ such that
$W=\ell(Y)$.

Let us define $\mathcal{A}':=\{\phi\in\mathcal{A}:\,\tilde{X}_1=0,\ V_-([X_1(\phi)-W]^-)<\infty\}$.
For each $\phi\in\mathcal{A}'$ the position in the risky asset is liquidated by the terminal
date $1$ and the utility functional $V$ is well-defined for the value of the money account
at $1$ minus the benchmark. We aim to find an optimal investment strategy, i.e.  $\phi^{\dagger}\in\mathcal{A}'$ with
\begin{equation*}\label{problem}
V(X_1(\phi^{\dagger})-W)=\sup_{\phi\in\mathcal{A}'}V(X_1(\phi)-W).
\end{equation*}

\begin{remark}
	{\rm Note that if $w_{\pm}(p)=p$ (that is, there is no distortion) then we have 
		$V(X)=Eu(X)$ where $u(x)=u_+(x)$, $x\geq 0$ and $u(x)=-u_-(-x)$ for $x<0$.
		This shows that the above setting generalizes the well-known expected utility
		framework, see e.g. \cite{KS99,ks_exp,sch,sara,sara2}.
		
		It would be possible to prove analogues of Theorem \ref{main} below
		for other types of objectives e.g. the performance measures of 
		\cite{cherny}. We stress that the main purpose of the present
		paper is to demonstrate a useful method and not to explore all
		possible ramifications.}
\end{remark}

\begin{assumption}\label{bou}
	We assume that $V_+([B-W]^+)<\infty$ and $EW^+<\infty$. Furthermore, there exist $0<\delta_2<\delta_1$ such that
	\begin{equation}\label{mukk}
	u_-(x)\geq c_1 x^{\delta_1}-c_2
	\end{equation}
	and 
	\begin{equation}\label{mukk1}
	w_-(p)\geq c_3 p^{\delta_2},    
	\end{equation}
	with some
	constants $c_1,c_2,c_3>0$ and for all $x\in\mathbb{R}_+$, $p\in [0,1]$.
\end{assumption}

\begin{remark} {\rm $V_+([B-W]^+)<\infty$, $EW^+<\infty$ are integrability conditions that are easy to verify in concrete
		situations. Specifications of $u_-$, $w_-$ satisfy  
		\eqref{mukk}, \eqref{mukk1} with some $\delta_1,\delta_2>0$ quite often,
		going back to \cite{tk}. 
		It was shown in \cite{rr} that in a frictionless Black-Scholes market $\delta_1>\delta_2$ is
		necessary for well-posedness of \eqref{problem}. Hence the conditions of
		Assumption \ref{bou} are rather natural. If we assumed $u_+$ bounded above, we could substantially
		relax \eqref{mukk} and \eqref{mukk1} along the lines of \cite{rrr}.}
\end{remark}

For comparisons with Theorem \ref{main} below, we recall a consequence of Theorem 5.1 in \cite{gr}.

\begin{theorem}\label{utility} Let Assumption \ref{modl} be in vigour, let $u:\mathbb{R}\to\mathbb{R}$ be concave and nondecreasing, 
	and let $E|u(B-W)|<\infty$ hold. If $\mathcal{A}^{\circ}\neq\emptyset$ then there is $\phi^{\dagger}\in\mathcal{A}^{\circ}$
	such that
	\[
	Eu(X_1(\phi^{\dagger})-W)=\sup_{\phi\in\mathcal{A}^{\circ}}Eu(X_1(\phi)-W),
	\]  
	where $\mathcal{A}^{\circ}=\{\phi\in \mathcal{A}:\ \tilde{X}_1(\phi)=0,\ E(u(X_1(\phi)))^-<\infty\}$.
\end{theorem}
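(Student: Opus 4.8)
The plan is to deduce the statement from Theorem 5.1 of \cite{gr}, to which it is essentially a corollary, by absorbing the benchmark $W$ into the utility and by working in the enlarged filtration $\mathcal{G}$. First I would set $\hat{u}(\omega,x):=u(x-W(\omega))$ and observe that, since $u$ is concave and nondecreasing and $W=\ell(Y)$ is $\mathcal{F}_1$-measurable, the map $\hat{u}$ is jointly measurable and, for each fixed $\omega$, concave and nondecreasing in $x$. Maximizing $Eu(X_1(\phi)-W)$ over $\phi\in\mathcal{A}^{\circ}$ is then precisely the maximization of $E\hat{u}(\cdot,X_1(\phi))$, a state-dependent concave utility problem of the type treated in \cite{gr}. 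Since $\mathcal{G}_t=\mathcal{F}_t\vee\sigma(U)$ again satisfies the usual hypotheses and $U$ is independent of $\mathcal{F}_1$, Assumption \ref{modl} and the market bound continue to hold relative to $\mathcal{G}$; and, as already remarked, the proof of Theorem 5.1 in \cite{gr} carries over verbatim to progressively measurable strategies and does not use any feature of the utility beyond concavity, monotonicity and measurability, so it applies to $\hat{u}$ in the $\mathcal{G}$-filtered market.

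Next I would verify the integrability hypotheses. The market bound \eqref{market_bound} gives $X_1(\phi)\le B$ a.s.\ for every $\phi\in\mathcal{A}$, so monotonicity of $u$ yields $\hat{u}(\cdot,X_1(\phi))=u(X_1(\phi)-W)\le u(B-W)$, and the assumption $E|u(B-W)|<\infty$ furnishes an integrable majorant. Hence the objective is bounded above by a finite constant and is well-defined (with possible value $-\infty$) for every $\phi\in\mathcal{A}$. The admissibility condition defining $\mathcal{A}^{\circ}$ controls the negative part and ensures the objective is finite on $\mathcal{A}^{\circ}$, matching the admissible class used in \cite{gr}; since $\mathcal{A}^{\circ}\neq\emptyset$, the supremum is finite and a maximizing sequence exists.

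The substance of the existence argument is then imported from \cite{gr}. Along a maximizing sequence $\phi^n$ one uses the superlinear penalty $G_t(x)=H(S_t)|x|^{\alpha}$ together with the integrability condition \eqref{condition_H} to obtain a uniform bound on $\phi^n$ in an $L^{\beta}$-type sense; from this one extracts, by a Koml\'os/convex-combination argument that exploits concavity, a candidate optimizer $\phi^{\dagger}$, and passes to the limit in the objective using Fatou's lemma and the upper semicontinuity afforded by concavity. One checks that the liquidation constraint $\tilde{X}_1(\phi^{\dagger})=0$ and the admissibility $\phi^{\dagger}\in\mathcal{A}^{\circ}$ are preserved in the limit and that $E\hat{u}(\cdot,X_1(\phi^{\dagger}))$ attains the supremum.

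The main obstacle I anticipate is not the compactness or the limit-passing, which \cite{gr} supplies, but confirming that their framework genuinely accommodates the two modifications: the random, benchmark-shifted utility $\hat{u}$ and the enlarged strategy class over $\mathcal{G}$. Concretely, one must ensure that replacing $u(x)$ by $u(x-W(\omega))$ preserves whatever coercivity and integrability assumptions Theorem 5.1 of \cite{gr} imposes on the utility, and that the admissible set $\mathcal{A}^{\circ}$ and the constraint $\tilde{X}_1=0$ correspond exactly to the objects there. This is the step where the hypotheses $E|u(B-W)|<\infty$ and the finiteness of $E(u(X_1(\phi)))^{-}$ on $\mathcal{A}^{\circ}$ do their work, and it is the only point requiring genuine verification rather than a direct appeal.
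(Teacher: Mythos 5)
Your proposal is correct and follows essentially the same route as the paper, which proves this theorem simply by checking that Assumption \ref{modl} implies Assumptions 2.2 and 2.3 of \cite{gr} and then invoking Theorem 5.1 of that reference (noting, as you do implicitly, that the argument there carries over to progressively measurable strategies and that $\mathcal{A}^{\circ}\neq\emptyset$ must be added to the hypotheses). Your additional care in verifying that the benchmark-shifted, state-dependent utility $u(\cdot-W)$ and the enlarged filtration $\mathcal{G}$ fit the framework of \cite{gr} is a sensible elaboration of details the paper leaves to the cited reference.
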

\begin{proof}
	Assumptions 2.2 and 2.3 of \cite{gr} hold by Assumption \ref{modl} above hence Theorem 5.1
	of \cite{gr} applies. We remark that, regrettably, the condition $\mathcal{A}^{\circ}\neq\emptyset$ is missing
	from the statement of Theorem 5.1 of \cite{gr} though it is clearly necessary. Here we
	publish a corrected statement.
\end{proof}

The next theorem is the main result of the present paper which extends Theorem \ref{utility}
to a much broader family of preferences.

\begin{theorem}\label{main} Let Assumptions \ref{u}, \ref{ghj}, \ref{modl} and \ref{bou} be in vigour.
	If $\mathcal{A}'\neq\emptyset$ then there exists $\phi^{\dagger}\in
	\mathcal{A}'$ such that 
	$$V(X_1(\phi^{\dagger})-W)=\sup_{\phi\in\mathcal{A}'}V(X_1(\phi)-W).$$
\end{theorem}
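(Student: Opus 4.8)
The plan is to establish existence via a direct (compactness) method: take a maximizing sequence, extract a limit using the Skorohod representation machinery of Section \ref{tetto}, and show the limit is both feasible and optimal. The crucial decision is the choice of the topological space $Z$ to which Theorem \ref{altalanos} is applied. Following Example \ref{lp}, I would encode each strategy $\phi\in\mathcal{A}'$ as a random element of $T=L^{\beta}([0,1])$ equipped with the weak topology, the integrability exponent $\beta$ being exactly the one from Assumption \ref{modl}. The second factor $M$ should carry the information structure of the market together with the randomization device: a natural choice is $M=\mathcal{D}^m\times[0,1]$, coding the driving process $Y$ (hence the price $S$ via Lemma \ref{measurability}) and the uniform variable $U$. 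Thus each $\phi$ gives rise to a law on $Z=T\times M$, and $Z$ satisfies the hypotheses of Theorem \ref{altalanos} by Corollary \ref{fonti} and Lemma \ref{fanta}.

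The steps I would carry out are as follows. First I would pick a maximizing sequence $\phi_n\in\mathcal{A}'$ with $V(X_1(\phi_n)-W)\to\sup_{\phi\in\mathcal{A}'}V(X_1(\phi)-W)$, and verify that the corresponding laws $\mu_n=\mathrm{Law}(\phi_n,Y,U)$ on $Z$ form a tight sequence. Tightness in the $M$-directions is free since $Y$ and $U$ have fixed laws; tightness in the $T=L^{\beta}_{\mathrm{weak}}$ direction requires an a priori bound $\sup_n E\|\phi_n\|_{L^{\beta}}^{\beta}<\infty$, so that the closed balls $B_n$ (which are weak-compact by reflexivity/the Banach--Alaoglu argument used in Corollary \ref{fonti}) absorb the mass uniformly. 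This bound is where Assumption \ref{modl} does its work: the penalty term $\int_0^1 G_u(\phi_u)\,du = \int_0^1 H(S_u)|\phi_u|^{\alpha}\,du$ controls an $L^{\alpha}$-norm of $\phi_n$, and since $\alpha>\beta$ and $V_-$ is coercive through \eqref{mukk}--\eqref{mukk1}, a maximizing sequence cannot have exploding penalties without driving $V$ to $-\infty$; condition \eqref{condition_H} then converts the $H$-weighted $L^{\alpha}$ control into a genuine $L^{\beta}$ bound via H\"older's inequality. Second, Theorem \ref{altalanos} yields a subsequence and random elements $(\phi^{\dagger},\bar Y,\bar U)$, $(\phi_{n_j}',Y_j',U_j')$ on a new probability space with matching laws and a.s.\ convergence $\phi_{n_j}'\to\phi^{\dagger}$ weakly in $L^{\beta}$, $Y_j'\to\bar Y$ in $\mathcal{D}^m$, $U_j'\to\bar U$. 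Third, I would reconstruct on this new space the market objects: define $\bar S=f(\bar Y)$ by Lemma \ref{measurability}, the augmented filtration, and the terminal wealth $X_1(\phi^{\dagger})$ relative to $\bar S$, checking that $\phi^{\dagger}$ is progressively measurable with respect to $\bar{\mathcal{G}}_t$ so that $\phi^{\dagger}\in\mathcal{A}$.

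The fourth and decisive step is passing to the limit in the objective. Since $V$ depends only on the law of $X_1(\phi)-W$ and the Skorohod copies preserve laws, it suffices to show $\limsup_j V(X_1(\phi_{n_j}')-W')\le V(X_1(\phi^{\dagger})-W')$ along the new space, i.e.\ upper semicontinuity of $V$ under our a.s.\ convergence. Here the integrand must be handled carefully: the linear cost $\int_0^t\phi_u S_u\,du$ is weakly continuous in $\phi\in L^{\beta}$ because $S\in L^{\gamma}$ by \eqref{condition_H}, but the convex penalty $\int_0^1 H(\bar S_u)|\phi_u|^{\alpha}\,du$ is only weakly lower semicontinuous, which means $X_1(\phi^{\dagger})\ge\limsup_j X_1(\phi_{n_j}')$ in the appropriate sense — favorable, since more wealth can only help the gain part. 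I then combine Fatou-type arguments for $V_+$ with the coercivity bound on $V_-$ (using Assumption \ref{bou} and the uniform integrability extracted from the penalty control) to conclude $V(X_1(\phi^{\dagger})-W')\ge\sup$, and that $\phi^{\dagger}\in\mathcal{A}'$. The main obstacle I anticipate is precisely this semicontinuity analysis: the distortions $w_{\pm}$ make $V$ a highly nonlinear functional of the law rather than an expectation, so the usual convex-combination and Koml\'os arguments are unavailable (as the introduction stresses), and one must argue upper semicontinuity directly through the a.s.\ convergence of the Skorohod representations, reconciling the weak $L^{\beta}$ convergence of strategies with the strong pathwise convergence of $Y$ and with the subtle interplay between the weakly lower-semicontinuous penalty and the distorted gain/loss functionals.
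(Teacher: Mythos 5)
Your overall architecture (maximizing sequence, tightness on $L^{\beta}_{\mathrm{weak}}\times\mathcal{D}^m\times\cdots$, the Jakubowski--Skorohod representation of Theorem \ref{altalanos}, weak lower semicontinuity of the penalty plus Fatou-type arguments for $V$) matches the paper's, and your tightness reasoning is essentially Lemma 3.4 of \cite{gr} combined with the coercivity bound $\sup_n E(X_1(\phi_n)-W)^-<\infty$ from \eqref{mukk}--\eqref{mukk1}. But there is a genuine gap at the step you dismiss as ``checking that $\phi^{\dagger}$ is progressively measurable'': a weak-$L^{\beta}$ limit of $\mathcal{G}_t$-adapted processes need not be adapted to anything useful, since passing to a limit in law can create extra randomness not measurable with respect to $Y^*$ (cf.\ Remark \ref{indi}); moreover, the theorem demands an optimizer in $\mathcal{A}'$, i.e.\ on the \emph{original} filtered space $(\Omega,\mathcal{G}_t,P)$, whereas the representation theorem hands you objects on a new space $(O,\mathcal{H},Q)$ with no filtration. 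The paper closes this gap with a dedicated argument that your proposal lacks entirely: the independence of $\phi_n 1_{[0,t]}$ from the future increments $Y(u)-Y(t)$ (this is where Assumption \ref{u}, independent increments, is indispensable) is a statement about joint laws and therefore survives both the change of probability space and the passage to the weak limit; Lemma \ref{fuggi} converts that independence into $\sigma(U,{}^tY)$-measurability; Lemma \ref{transfer} uses the uniform variable $U$ of Assumption \ref{ghj} to manufacture, back on $\Omega$, a random element with the prescribed joint law with $Y$; and Lemma \ref{versi} produces a progressively measurable version. Note that your use of $U$ --- as a third coordinate in the tuple fed to the representation theorem --- does not perform this function; in the paper $U$ is the randomization device for the transfer, not part of the weak limit.

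Two further points. First, the paper puts $X_1(\phi_n)$ (not $U$) as the third coordinate of $\mu_n$, precisely so that the terminal wealths converge almost surely on the new space; with your tuple $(\phi_n,Y,U)$ the wealths do not converge (the penalty is only lower semicontinuous), and your $\limsup$-based upper semicontinuity argument for $V$, while probably repairable via reverse Fatou and the uniform integrability supplied by the market bound $B_n$, is considerably more delicate than you acknowledge. Second, your closing claim that ``the usual convex-combination and Koml\'os arguments are unavailable'' misstates the situation: they are unavailable for improving the \emph{objective} (Remark \ref{worth}), but the weak lower semicontinuity of $\phi\mapsto\int_0^1 H(S_t)|\phi_t|^{\alpha}dt$ that your own step relies on is proved exactly by forming a.e.-convergent convex combinations --- this is the Koml\'os step (Corollary \ref{komlos}) at the heart of the paper's proof that $X^*\leq X_1(\phi^*)$.
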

\begin{proof} 
	We provide a quick overview of the main steps in our argument. Taking an optimizer sequence we
	show that their laws form a tight sequence (on $L^{\beta}$ with the weak topology). Then we invoke
	results of Section \ref{tetto} to realize (on another probability space) a sequence whose members have the same laws but which converge
	almost surely. Using convex combinations coming from the theorem of Koml\'os we can show that the limit is an optimizer. However, we have
	to construct this optimizer on the original space as well so we use $U$ and rely on the usual construction of a random variable which
	has a given joint law with another, fixed random variable, see Lemma \ref{transfer}.

	Let us take $\phi_n\in\mathcal{A}'$, $n\in\mathbb{N}$ such that 
	$$
	V(X_1(\phi_n)-W)\to\sup_{\phi\in\mathcal{A}'}V(X_1(\phi)-W),\ n\to\infty.
	$$
	Recall that $\gamma$ denotes the conjugate number of $\beta$, see Example \ref{lp} above.
	We consider the space $L^{\beta}$ as defined in Example \ref{lp} above, equipped with the weak topology.
	We intend to use Corollary \ref{fonti}, Lemma \ref{fanta} and Theorem \ref{altalanos}
	with the choice $\mathbb{B}:=L^{\gamma}$ (then $\mathbb{B}'=L^{\beta}$)
	and $M:=\mathcal{D}^m\times (\mathbb{R} \cup \{-\infty\})$, 
	$\mu_n:=\mathrm{Law}(\phi_n,Y,X_1(\phi_n))$.
	
	First we show that $\phi_n:\Omega\to L^{\beta}$ is measurable when $L^{\beta}$ is equipped with
	the Borel field of the weak topology. It clearly suffices to show that, for all 
	$\mathcal{G}_1\otimes\mathcal{B}([0,1])$-measurable $\zeta$ with $\int_0^1 |\zeta_t(\omega)|^{\beta}\, dt<\infty$, 
	$\omega\to \int_0^1 \zeta(t)(\omega)q(t)dt$ is measurable for all $q\in L^{\gamma}$. Approximating $\zeta$
	by step functions and using linearity of the integral, it is enough to show this for 
	$\zeta:=1_K$ where $K\in\mathcal{G}_1\otimes\mathcal{B}([0,1])$. A monotone class argument
	reduces this to the case where $K=A\times B$ with $A\in\mathcal{G}_1$ and $B\in\mathcal{B}([0,1])$. 
	But then
	$$
	\int_0^1 \zeta(t)(\omega)q(t)dt=1_A(\omega) \int_B q(t)\, dt,
	$$
	which is trivially $\mathcal{G}_1$-measurable.
	
	$X_1(\phi_n)$ is a ($\mathbb{R}\cup\{-\infty\}$-valued) random variable by Lemmata \ref{con} and \ref{cop}.
	Finally, $Y:\Omega\to \mathcal{D}$ is measurable, see Remark \ref{ipszilon} above. This clearly
	implies the measurability of the triplet $(\phi_n,Y,X_1(\phi_n))$.
	
	\eqref{market_bound} implies that
	$\mathrm{Law}(X_1(\phi_n))$ is a tight sequence in $\mathbb{R}\cup \{-\infty\}$. Clearly, 
	$$\inf_n V(X_1(\phi_n)-W)>-\infty$$ so necessarily $\sup_n V_-([X_1(\phi_n)-W]^-)< \infty$,
	by \eqref{market_bound} and by $V_+([B-W]^+) < \infty$ in Assumption \ref{bou}.
	
	Lemma 3.12 of \cite{rr} (with the choice $s:=1$, $a:=\delta_2$, $b:=\delta_1$) implies that also 
	$\sup_n E(X_1(\phi_n) - W)^- <\infty$. 
	By the proof of Lemma 3.4 of \cite{gr} and by $E(X_1(\phi_n)-W)^-\leq E(X_1(\phi_n))^- + EW^+$,
	we get that 
	\begin{eqnarray}\label{barha}
	E\int_0^1 |\phi_n(t)|^{\beta}(1 + |S(t)|)^{\beta} dt\leq 
	E(X_1(\phi_n))^- + EW^+ &+&\\ 
	\nonumber 2^{\beta/(\alpha-\beta)}E\int_0^1
	{H^{\beta/(\beta - \alpha)}(S_t)(1 + |S_t|)^{\beta \alpha / (\alpha - \beta)}}\, dt+1 &=:& C<\infty,
	\end{eqnarray}
	by Assumptions \ref{modl} and \ref{bou}. As $C$ is independent of $n$, Markov's inequality implies
	$$
	P\left(\int_0^1 |\phi_n(t)|^{\beta}(1 + |S(t)|)^{\beta} dt\geq r\right)\leq C/r,
	$$
	for all $r>0$. Noting that closed balls of $L^{\beta}$ around the origin are weakly compact by the Banach-Alaoglu theorem (since $L^{\beta}$ is a reflexive
	Banach space), we get that $\mathrm{Law}(\phi_n)$, $n\in\mathbb{N}$ is a tight sequence of probabilities on 
	$\mathcal{B}(L^{\beta})$. Finally, as $Y$ takes values in a Polish space,
	$\mathrm{Law}(Y)$ is tight. It follows that $\mu_n$ is tight on $\mathcal{B}(L^{\beta}\times \mathcal{D}^m\times
	(\mathbb{R}\cup\{-\infty\}))$.
	
	Now apply Corollary \ref{fonti}, Lemma \ref{fanta} and Theorem \ref{altalanos} to get a probability space 
	$(O,\mathcal{H},Q)$ and
	$L^{\beta}\times \mathcal{D}\times (\mathbb{R} \cup \{-\infty\})$-valued random variables 
	$(\tilde{\phi}_n,Y_n,X_n)$ that converge
	a.s. to $(\phi^*,Y^*,X^*)$ along a subsequence (for which we keep the same notation) and 
	$\mathrm{Law}_Q(\tilde{\phi}_n,Y_n,X_n)=\mathrm{Law}(\phi_n,Y,X_1(\phi_n))$, $n\in\mathbb{N}$. 
	Passing to a further
	subsequence, we may and will assume $S_n:=f(Y_n)\to S^*:=f(Y^*)$ a.s. in $\mathcal{D}^1$, 
	by Lemmata \ref{measurability}, \ref{fonction} and by the fact that each $Y_n$
	has the same law (on $\mathcal{D}^m$). Analogously, we may and will assume $W_n=\ell(Y_n)\to W^*:=\ell(Y^*)$
	a.s. in $\mathbb{R}$. 
	By the argument of Lemma \ref{versi}, we may assume that $\tilde{\phi}_n$ can be identified with a
	$\mathcal{H}\otimes\mathcal{B}([0,1])$-measurable process.
	
	Let us define the analogue of the functionals $V_{\pm}$, $V$, for real-valued random variables
	$X$ on $(O,\mathcal{H},Q)$.
	\begin{eqnarray*}
		V_+^Q(X):=\int_0^{\infty} w_+\left(Q\left(
		u_+\left(X\right)\geq y
		\right)\right)dy,
	\end{eqnarray*}
	and
	\begin{eqnarray*}
		V_-^Q(X):=\int_0^{\infty} w_-\left(Q\left(
		u_-\left(X\right)\geq y
		\right)\right)dy.
	\end{eqnarray*}
	For each $X$ with $V_-^Q(X^-)<\infty$ we set 
	\begin{equation*}
	V^Q(X):=V^Q_+(X^+)-V^Q_-(X^-). 
	\end{equation*}
	
	Define 
	$$
	B_n:=\int_0^1\frac{\alpha-1}{\alpha}\alpha^{1/(1-\alpha)}H^{1/(1-\alpha)}(S^n_t)|S^n_t|^{\alpha/(\alpha-1)}dt,
	$$ 
	see \eqref{csincsin} and \eqref{market_bound}.
	The family $(B_n,W_n)$, $n\in\mathbb{N}$ has the same law under $Q$ (that of $(B,W)$ under $P$) hence the family
	of functions \begin{equation}\label{jaffa}t\to 
	w_+(Q(u_+([B_n-W_n]^+)\geq t)),\ n\in\mathbb{N},  
	             \end{equation}
        is uniformly integrable 
	(with respect to the Lebesgue measure on $\mathbb{R}_+$), by Assumption \ref{bou}. As for all $n$,
	$$
	-\int_0^1 \tilde{\phi}_n(t)S_n(t)\, dt-\int_0^1 H(S_n(t))|\tilde{\phi}_n(t)|^{\alpha}\, dt\leq B_n\mbox{ a.s.},
	$$
	uniform integrability of \eqref{jaffa} and 
	Fatou's lemma imply that $$V^Q(X^*-W^*)\geq \limsup_n V^Q(X_n-W_n),$$ so 
	$V^Q(X^*-W^*)\geq \sup_{\phi\in\mathcal{A}'} V(X_1(\phi)-W)$, in particular,
	$V^Q(X^*-W^*)>-\infty$.
	
	With the functional $F$ defined in 
	Lemma \ref{con}, we have, a.s., 
	$\int_0^1 \tilde{\phi}_n(t) S_n(t)\, dt=F(\tilde{\phi}_n,S_n)$ and hence, by Lemma \ref{con}, 
	\begin{equation}\label{mr}
	\int_0^1 \phi^*(t) S^*(t)\, dt=\lim_n
	\int_0^1 \tilde{\phi}_n(t) S_n(t)\, dt.
	\end{equation}
	It is also clear that \begin{equation}\label{akkad}
	\int_0^1 \phi^*(t)\, dt=0
	\end{equation}
 since $\tilde{\phi}_n$ tends to $\phi^*$
	a.s. weakly in $L^{\beta}$.
	
	From the almost sure convergence of $\tilde{\phi}_n$ to $\phi^*$ we get that, for almost every $\omega\in O$,
	\begin{equation}\label{borni}
	\sup_n \int_0^1 |\tilde{\phi}_n(t)(\omega)|^{\beta} dt<\infty
	\end{equation}
	(since a weakly convergent sequence in $L^{\beta}$ is weakly bounded hence
	also norm-bounded). A fortiori, $\sup_n \int_0^1 |\tilde{\phi}_n(t)(\omega)| dt<\infty$ a.s.
	
	Applying Lemma \ref{komlos} on the probability space 
	$(O\times [0,1],\mathcal{H}\otimes \mathcal{B}([0,1]),
	Q\otimes Leb)$ we get a subsequence (still denoted by $n$) such that
	$\widehat{\phi}_N:=\frac{1}{N}\sum_{n=1}^{N} \tilde{\phi}_n$ converge to
	some $\phi^{\circ}$ $P\otimes Leb$-a.s., $N\to\infty$, and, by \eqref{borni}, also 
	almost surely in the norm of $L^{1}$. This implies 
	convergence in the weak topology of $L^{1}$. As $\tilde{\phi}_n$ and hence also
	$\widehat{\phi}_n$ converge to $\phi^*$ in the weak topology of $L^{\beta}$ and
	thus also in the weak topology of $L^1$, we get that
	$\phi^{\circ}=\phi^*$
	necessarily, $P\otimes Leb$-a.s. hence we may and will use $\phi^{\circ}$ as a version
	of $\phi^*$ in what follows; $\phi^{\circ}$ is an $\mathcal{H} \times \mathcal{B}([0,1])$-measurable process. 
	
	Continuity of $H$ implies $H(S_n(t))\to H(S^*(t))>0$ a.s. so Fatou's lemma and convexity of $x\to |x|^{\alpha}$
	lead to
	\begin{eqnarray*}
	-X^*-\int_0^1 S^*(t)\phi^*(t)\, dt &=&\\
	\lim_{N} \frac{1}{N}\sum_{n=1}^N\int_0^1 H(S_n(t))|\tilde{\phi}_n(t)|^{\alpha}\, dt &\geq&\\
	\int_0^1 \liminf_N \frac{1}{N}\sum_{n=1}^N H(S_n(t))|\tilde{\phi}_n(t)|^{\alpha}\, dt &=&\\
	\int_0^1 \liminf_N \frac{1}{N}\sum_{n=1}^N H(S^*(t))|\tilde{\phi}_n(t)|^{\alpha}\, dt &\geq&\\
	\int_0^1 H(S^*(t))\liminf_N |\widehat{\phi}_N(t)|^{\alpha}\, dt &=&\\
	\int_0^1 H(S^*(t))|\phi^*(t)|^{\alpha}\, dt.
	\end{eqnarray*}


It follows that 
	$-\int_0^1 S^*(t)\phi^*(t)\, dt-\int_0^1 H(S^*(t))|\phi^*(t)|^{\alpha}\, dt\geq X^*$ so 
	\begin{equation}\label{hoarse}
	V^Q\left(-\int_0^1 S^*(t)\phi^*(t)\, dt-\int_0^1 H(S^*(t))|\phi^*(t)|^{\alpha}\, dt-W^*\right)
	\geq \sup_{\phi\in\mathcal{A}'} V(X_1(\phi)-W).
	\end{equation}
	
	Let us invoke Lemma \ref{transfer} with the choice $\tilde{\phi}:=\phi^*$, $\tilde{H}:=Y^*$ and
	$H:=Y$. We get a $\mathcal{G}_1$-measurable random element
	$\phi^{\ddagger}:=\phi\in L^{\beta}$ satisfying $\mathrm{Law}(\phi^{\ddagger},Y)=\mathrm{Law}_Q
	(\phi^*,Y^*)$. Let us fix $0\le t < u \le 1$. We recall that 
	$\phi_n 1_{[0,t]}$ is independent from ${Y}(u) - {Y}(t)$, or equivalently, 
	$$
	\mathrm{Law}({\phi}_n 1_{[0,t]},{Y}(u) - {Y}(t)) = \mathrm{Law}({\phi}_n 1_{[0,t]}) \otimes 
	\mathrm{Law}({Y}(u) - {Y}(t)).$$
	By construction, $\mathrm{Law}(\phi_n 1_{[0,t]},Y(u) - Y(t))  = \mathrm{Law}_Q(\tilde{\phi}_n 1_{[0,t]},Y_n(u)-{Y}_n(t))$.
	This implies also 
	$$
	\mathrm{Law}_Q(\tilde{\phi}_n 1_{[0,t]},Y_n(u) - Y_n(t))  = \mathrm{Law}_Q(\tilde{\phi}_n 1_{[0,t]}) \otimes \mathrm{Law}_Q(Y_n(u) - {Y}_n(t)).
	$$
	Passing to the limit as $n\to\infty$,
	$$
	\mathrm{Law}_Q(\phi^* 1_{[0,t]},Y^*(u) - Y^*(t))  = 
	\mathrm{Law}_Q(\phi^* 1_{[0,t]}) \otimes \mathrm{Law}_Q(Y^*(u) - {Y}^*(t)),$$
	which implies independence of $\phi^{\ddagger} 1_{[0,t]}\in L^{\beta}$ from $_t Y\in\mathcal{D}^m$ as well where
	$(_t Y)_s:= 0$ if $0\leq s\leq t$ and $(_t Y)_s:=Y_s-Y_t$, $t<s\leq 1$. 
	
	Since $Y$ is clearly a 
	measurable function 
	of $(_t Y,^t Y)\in \mathcal{D}^m\times\mathcal{D}^m$, applying Lemma \ref{fuggi} with the choice $\mathfrak{b}:=_t Y$ and $\mathfrak{a}:=(U,^t Y)$ we get that
	$\phi^{\ddagger} 1_{[0,t]}$ is $\mathcal{G}_t$-measurable, for all $t$, see Remark \ref{ipszilon}. 
	Applied to $\phi:=\phi^{\ddagger}$, Lemma \ref{versi} 
	provides a $\mathcal{G}_t$-progressively measurable $\phi^{\dagger}$ such that 
	$\mathrm{Law}_Q(\phi^*,Y^*) =\mathrm{Law}(\phi^{\dagger},Y)$, so $\phi^{\dagger}\in\mathcal{A}$ .
	
	As $\mathrm{Law}(\phi^{\dagger},Y)=\mathrm{Law}_Q
	(\phi^*,Y^*)$, also $\mathrm{Law}(\phi^{\dagger},S)=\mathrm{Law}_Q
	(\phi^*,S^*)$. Recalling Lemmata \ref{con} and \ref{cop},
	$$
	\mathrm{Law}_Q\left(-\int_0^1 S^*(t)\phi^*(t)\, dt-\int_0^1 H(S^*(t))|\phi^*(t)|^{\alpha}\, dt-W^*\right)=
	\mathrm{Law}(X_1(\phi^{\dagger})-W).
	$$
	It follows from \eqref{hoarse} that $V(X_1(\phi^{\dagger})-W)>-\infty$ and 
	$\int_0^1 \phi^{\dagger}(t)\, dt=0$ by \eqref{akkad}
	hence $\phi^{\dagger}\in\mathcal{A}'$ and $\phi^{\dagger}$
	is the maximizer we have been looking for.
\end{proof}

\begin{remark}\label{indi} {\rm The last part of the proof shows why the independent increments property of $Y$ is a crucial
hypothesis. Even if the sequence $\phi_n 1_{[0,t]}$ is $\mathcal{G}_t$-measurable this does not necessarily hold for its
limit (in any sense). So we proceed by noticing that $\tilde{\phi}_n 1_{[0,t]}$ are ``orthogonal'' to $_t Y_n$ and this property
easily passes to the limit and leads to the eventual construction of $\phi^{\dagger}$.

The need for $U$ is also apparent: taking a limit in weak convergence may easily generate additional randomness (think about the 
construction of weak solutions for stochastic equations such as the Tanaka equation) and hence $\phi^* 1_{[0,t]}$ is
not necessarily a functional of $Y^*$ (even though each $\tilde{\phi}_n$ was a functional of $Y_n$).}
\end{remark}

\begin{remark}\label{worth} {\rm It is worth commenting on the use of convex combinations in papers
		dealing with concave utility functions (e.g. \cite{KS99,ks_exp,sch}) in comparison with the 
		current paper. 
		
		When the utility function is concave and no distortions are present then using
		convex combinations improves performance (either in the utility maximization or
		in its dual problem where minimization of a convex functional is considered).
		Converging convex combinations thus directly yield an optimizer in these cases.
		
		The present setting is essentially different: the optimizer is found as
		the weak limit of a sequence of laws. At this point, taking convex combinations $\widehat{\phi}_n$
		of the $\phi_n$
		would not make sense since, by lack of convexity, $v_n:=V(X_1(\widehat{\phi}_n)-W)$ 
		will not necessarily dominate the respective convex combintions of $V(X_1({\phi}_n)-W)$ and thus $v_n$ may cease to converge to
		$\sup_{\phi} V(X_1(\phi)-W)$. 
		
		In our approach, convex combinations are needed at a subsequent stage, in order
		to show that the $X^*$ we constructed is indeed (dominated by) a portfolio value. 
		The representation of Theorem \ref{altalanos} is crucial in that argument as it permits to form
		such convex combinations on the auxiliary probability space $(O,\mathcal{H},Q)$. 
		A similar use of Skorohod's representation theorem appears
		in the proof of Lemma A.6.4 in \cite{kp} where we drew our inspiration from.} 
\end{remark}

\begin{remark}
	{\rm Theorem \ref{main} proves the existence of an optimizer in the family of
		randomized strategies, i.e. $\mathcal{G}_t$-progressively measurable ones. Such a formulation with ``relaxed'' controls is standard,
		see e.g. \cite{fleming,borkar}, but there are at least two additional arguments in favour
		of this family in our specific setting. 
		
		Lack of concavity and the presence of distortions seem to exclude arguments
		based on almost sure convergence of convex combinations (e.g. the Koml\'os lemma, see Lemma \ref{komlos1}) which is the typical
		technique for existence proofs in infinite dimensional spaces, see e.g. \cite{KS99,cs}.
		The natural way to attack such problems is switching to convergence in law (as usual in the
		weak formulation of stochastic differential equations, too), see e.g. \cite{cd,cr}. However, it was
		demonstrated in Section 5 of \cite{miki} that the set of laws of attainable
		portfolio values can easily fail to be closed for weak convergence, even in one-step, 
		frictionless models.
		Finding an optimizer over a non-closed domain looks hopeless. It was shown 
		in Section 6 of \cite{cr} in a discrete-time setting that the set of attainable
		portfolio values becomes closed when using randomized strategies. This is the first
		reason for our choice of the class of feasible strategies.
		
		As noticed in Section 5 of \cite{cr}, the investor
		may actually increase her satisfaction by randomizing her strategy, a phenomenon due to the presence of distortions.
		This is a second argument for the use of $\mathcal{G}_t$-adapted strategies.
		
		It is a delicate question under what kind of conditions Theorem \ref{main} remains
		true with $\mathcal{G}_t$-measurable strategies replaced by $\mathcal{F}_t$-measurable
		ones. This is object of current research.}
\end{remark}

\begin{remark} 
{\rm It would also be desirable to exhibit cases where optimal strategies can be found that are adapted 
to $(\mathcal{F}^S_t)_{t\in [0,1]}$, the filtration generated by the asset price $S$. 
As explained in Remark \ref{worth}, it doesn't seem possible to carry out the construction of optimal strategies 
using almost sure convergence, so even if we start with an $(\mathcal{F}^S_t)_{t\in [0,1]}$-adapted optimizer sequence 
there is no guarantee that an $(\mathcal{F}^S_t)_{t\in [0,1]}$-adapted limit could be found. 
		
One could expect positive results in the case where $(\mathcal{F}^S_t)_{t\in [0,1]}$ is ``rich enough'' in the sense that for every 
$$\mu\in\{\mathrm{Law}(X_1(\phi)):\ \phi\in\mathcal{A}'\}$$
there exists $\lambda$, adapted to $(\mathcal{F}^S_t)_{t\in [0,1]}$, such that $\mathrm{Law}(X_1(\lambda))=\mu$. 
This seems to require delicate arguments even in the case of frictionless markets and hence lies beyond the scope of the present work.} 	
\end{remark}

\begin{remark}\label{rob}
	{\rm The main lines of the above proof seem to work in situations with model uncertainty as well.
		In that case $V(X_1(\phi))=V^S(X_1(\phi))$ is calculated for each $S\in\mathcal{S}$ where $\mathcal{S}$ is
		a family of price processes and one tries to maximize the worst-case functional
		$\inf_{S\in\mathcal{S}}V^S(X_1(\phi))$ over $\phi$, see e.g. \cite{nutz,ariel}.
		
		As an optimizing sequence one can consider $(\phi^n,S_n)$ instead of $(\phi^n,Y_n)$ in this case
		(assuming e.g. that each $S\in\mathcal{S}$ generates the same filtration as $Y$). In this setting
		the $S_n$ may well have different laws. Nevertheless, under appropriate tightness conditions 
		the construction above may provide an optimizer $\phi^*$ as well as the worst-case model 
		$S^*=\mathrm{argmin}_S V^S(X_1(\phi^*))$. These ideas
		are left for exploration in future research.} 
\end{remark}

\section{A formulation with generalized strategies}\label{j}

In the theory of stochastic differential equations, the concept of weak solutions
allows to vary the underlying probability space, giving more flexibility for the
construction of solutions. Weak convergence of probability measures and
Skorohod's theorem are typical tools in that area.

As weak convergence techniques predominate in the present
paper as well and the solution $\phi^{\dagger}$ is also a ``weak'' one, in this section we
reformulate problem \eqref{problem} in a manner that is closer in spirit to
the world of weak solutions.

\begin{definition}\label{defi:general_stra} 
	Let $(\Omega,\mathcal{F},(\mathcal{F}_t)_{t\in [0,1]}, P)$, $Y,S,W$ be as in Section \ref{harom}. A five-tuple 
	$$ 
	\pi=(O, \mathcal{H}, Q, \tilde{Y}, \tilde{\phi})
	$$
	is called a \emph{generalized strategy} if 
	$(O,\mathcal{H},Q)$ is a probability space, $\tilde{Y}_t$, $t\in [0,1]$ is a c\`adl\`ag process on 
	$(O,\mathcal{H},Q)$ identical in law to $Y$, and
	$\tilde{\phi}$ is a process on $(O,\mathcal{H},Q)$ such that $\tilde{\phi}_s$ is independent
	of $\tilde{Y}_u-\tilde{Y}_t$ for all $0\leq s\leq t<u$ and $\int_0^1 |\tilde{\phi}_t|\, dt<\infty$
	$Q$-a.s.
\end{definition} 

Clearly, $\tilde{Y}$ is the copy of $Y$ providing the information structure on $(O,\mathcal{H},Q)$
and $\tilde{\phi}$ represents the trading speed which must be non-anticipative with respect
to $\tilde{Y}$. Defining $\tilde{W}:=\ell(\tilde{Y})$ and $\tilde{S}:=f(\tilde{Y})$,
we can set 
\begin{eqnarray*}
X_1(\tilde{\phi}) &:=& -\int_0^1 \tilde{\phi}_t \tilde{S}_t\, dt-\int_0^1
H(\tilde{S}_t) |\tilde{\phi}_t|^{\alpha}\, dt,\\
\tilde{X}_1(\tilde{\phi}) &:=& \int_0^1 \tilde{\phi}_t\, dt.
\end{eqnarray*}

Let $\Pi$ denote the class of generalized strategies. Another version of Theorem \ref{main}
could be stated as follows. Its proof closely follows that of Theorem \ref{main}.

\begin{theorem}\label{main2}
	Let Assumptions \ref{u}, \ref{modl} and \ref{bou} be in force. Let $\Pi':=\{\pi\in\Pi:V^Q_-([X_1(\tilde{\phi})
	-\tilde{W}]^-)<\infty,\ \tilde{X}_1(\phi)=0\}$. There exists $\pi^{\dagger}\in\Pi'$ such that
	$$
	V^{Q^{\dagger}}(X_1(\tilde{\phi}^{\dagger})
	-W^{\dagger}) = \sup_{\pi\in\Pi'} V^Q(X_1(\tilde{\phi})-\tilde{W}),
	$$
	where $\pi^{\dagger}=(O^{\dagger},\mathcal{H}^{\dagger},Q^{\dagger},\tilde{Y}^{\dagger},
	\tilde{\phi}^{\dagger})$ and $W^{\dagger}=\ell(\tilde{Y}^{\dagger})$.\hfill $\Box$
\end{theorem}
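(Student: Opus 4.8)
The plan is to repeat the argument of Theorem \ref{main} up to the construction of the limiting objects on the auxiliary space, and then to \emph{stop}: because a generalized strategy may be carried by an arbitrary probability space, the space $(O,\mathcal{H},Q)$ produced by the Skorohod representation can itself serve as the carrier of the optimizer. Consequently the entire final part of the proof of Theorem \ref{main} — the transfer back to the original space via Lemma \ref{transfer}, and with it Lemmata \ref{fuggi}, \ref{versi}, the randomizing variable $U$ and Assumption \ref{ghj} — becomes superfluous. This is exactly why Assumption \ref{ghj} does not appear among the hypotheses, and it is the structural simplification that makes the weak formulation natural.

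First I would choose an optimizing sequence $\pi_n=(O_n,\mathcal{H}_n,Q_n,\tilde Y_n,\tilde\phi_n)\in\Pi'$ with $V^{Q_n}(X_1(\tilde\phi_n)-\tilde W_n)\to\sup_{\pi\in\Pi'}V^Q(X_1(\tilde\phi)-\tilde W)$, where $\tilde W_n=\ell(\tilde Y_n)$ and $\tilde S_n=f(\tilde Y_n)$. Only the laws $\mu_n:=\mathrm{Law}_{Q_n}(\tilde\phi_n,\tilde Y_n,X_1(\tilde\phi_n))$ on $L^\beta\times\mathcal{D}^m\times(\mathbb{R}\cup\{-\infty\})$ enter the functional, so the fact that the $\pi_n$ live on different spaces is immaterial. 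The tightness of $(\mu_n)$ is established verbatim as in Theorem \ref{main}: the market bound \eqref{market_bound} controls the positive part, the optimizing property together with $V_+([B-W]^+)<\infty$ gives $\sup_n V^{Q_n}_-([X_1(\tilde\phi_n)-\tilde W_n]^-)<\infty$, Lemma 3.12 of \cite{rr} then yields $\sup_n E(X_1(\tilde\phi_n)-\tilde W_n)^-<\infty$, and the proof of Lemma 3.4 of \cite{gr} produces the uniform bound \eqref{barha} on $E\int_0^1|\tilde\phi_n(t)|^\beta(1+|\tilde S_n(t)|)^\beta\,dt$. Banach--Alaoglu makes $\mathrm{Law}(\tilde\phi_n)$ tight in the weak topology of $L^\beta$, and $\mathrm{Law}(\tilde Y_n)=\mathrm{Law}(Y)$ is tight in the Polish space $\mathcal{D}^m$. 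Corollary \ref{fonti}, Lemma \ref{fanta} and Theorem \ref{altalanos} then supply a common space $(O,\mathcal{H},Q)$ and copies $(\hat\phi_n,\hat Y_n,\hat X_n)$, with law $\mu_n$, converging a.s. along a subsequence to $(\phi^*,Y^*,X^*)$; passing to further subsequences I would arrange $S_n:=f(\hat Y_n)\to S^*:=f(Y^*)$ and $W_n:=\ell(\hat Y_n)\to W^*:=\ell(Y^*)$ a.s.

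The optimality and domination steps are then identical to those of Theorem \ref{main}. Uniform integrability of the family \eqref{jaffa} (by Assumption \ref{bou}) and Fatou's lemma give $V^Q(X^*-W^*)\ge\limsup_n V^{Q_n}(X_1(\tilde\phi_n)-\tilde W_n)=\sup_{\pi\in\Pi'}V^Q(X_1(\tilde\phi)-\tilde W)$. Applying the Koml\'os Lemma \ref{komlos} to the convex combinations $\frac1N\sum_{n=1}^N\hat\phi_n$ on $(O\times[0,1],\mathcal{H}\otimes\mathcal{B}([0,1]),Q\otimes Leb)$, together with the a.s. convergence $H(S_n)\to H(S^*)>0$, convexity of $x\mapsto|x|^\alpha$ and Fatou's lemma (the chain of inequalities preceding \eqref{hoarse}), I obtain $-\int_0^1 S^*(t)\phi^*(t)\,dt-\int_0^1 H(S^*(t))|\phi^*(t)|^\alpha\,dt\ge X^*$, so $V^Q(X_1(\phi^*)-W^*)\ge\sup_{\pi\in\Pi'}V^Q(X_1(\tilde\phi)-\tilde W)>-\infty$ with $X_1(\phi^*)$ as defined in this section; in particular $V^Q_-([X_1(\phi^*)-W^*]^-)<\infty$. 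Weak $L^\beta$-convergence also yields $\int_0^1\phi^*(t)\,dt=0$, the analogue of \eqref{akkad}, and since $\phi^*$ is $L^\beta$-valued we have $\int_0^1|\phi^*(t)|\,dt<\infty$ a.s.

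It remains only to verify that $\pi^\dagger:=(O,\mathcal{H},Q,Y^*,\phi^*)$ is a bona fide generalized strategy, i.e. that $\phi^*_s$ is independent of $Y^*_u-Y^*_t$ whenever $0\le s\le t<u$; $Y^*$ itself is c\`adl\`ag with the law of $Y$, being an a.s. limit of copies of $Y$. This is the one place where Assumption \ref{u} is used, exactly as in the final part of Theorem \ref{main}: admissibility of each $\pi_n$ gives the factorization $\mathrm{Law}_Q(\hat\phi_n 1_{[0,t]},\hat Y_n(u)-\hat Y_n(t))=\mathrm{Law}_Q(\hat\phi_n 1_{[0,t]})\otimes\mathrm{Law}_Q(\hat Y_n(u)-\hat Y_n(t))$, which is preserved under the a.s. (hence weak) convergence to $(\phi^*,Y^*)$. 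Thus $\pi^\dagger\in\Pi'$ and is the desired maximizer. The only point requiring genuine care is this last limit passage for the independence relation; everything else is either routine or carries over unchanged, and the hard filtration-measurability transfer that dominated the proof of Theorem \ref{main} is entirely bypassed here.
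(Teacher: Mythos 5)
Your proposal is correct and is essentially the paper's own argument: the paper proves Theorem \ref{main2} only by the remark that it ``closely follows'' Theorem \ref{main}, and your adaptation --- running the tightness/Skorohod/Koml\'os machinery verbatim but stopping on the auxiliary space $(O,\mathcal{H},Q)$, so that Lemma \ref{transfer}, Lemmata \ref{fuggi} and \ref{versi}, the randomizer $U$ and Assumption \ref{ghj} are all bypassed --- is exactly the intended simplification, as the absence of Assumption \ref{ghj} from the hypotheses confirms. One small caveat: your factorization $\mathrm{Law}(\hat\phi_n 1_{[0,t]},\hat Y_n(u)-\hat Y_n(t))=\mathrm{Law}(\hat\phi_n 1_{[0,t]})\otimes\mathrm{Law}(\hat Y_n(u)-\hat Y_n(t))$ requires reading Definition \ref{defi:general_stra} as joint (path-level) independence of $\tilde\phi 1_{[0,t]}$ from the increment rather than the literal pointwise independence of each $\tilde\phi_s$, but since any proof along the paper's indicated lines needs the same reading (and the limit's pointwise independence should be extracted from the path-level one via the averaging construction of Lemma \ref{versi}), this is a looseness in the paper's definition rather than a gap in your argument.
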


\section{Auxiliary results}\label{negy}

\begin{lemma}\label{fuggi}
	Let $(A,\mathcal{A})$, $(B,\mathcal{B})$ be measurable spaces and $j:A\times B\to\mathbb{R}$
	a measurable mapping. Let $(\mathfrak{a},\mathfrak{b})$ be an $A\times B$-valued random variable.
	If $\sigma(j(\mathfrak{a},\mathfrak{b}),\mathfrak{a})$ is independent of $\mathfrak{b}$ then 
	$j(\mathfrak{a},\mathfrak{b})$ is
	$\sigma(\mathfrak{a})$-measurable.
\end{lemma}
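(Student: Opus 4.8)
The plan is to show that applying any bounded test function to $X:=j(\mathfrak a,\mathfrak b)$ yields a $\sigma(\mathfrak a)$-measurable random variable, and then to invert an injective test function to recover $X$ itself. First I would record that the hypothesis already forces $\mathfrak a$ and $\mathfrak b$ to be independent, since $\sigma(\mathfrak a)\subseteq\sigma(X,\mathfrak a)$ and the latter is independent of $\mathfrak b$. Writing $\mathcal G:=\sigma(\mathfrak a)$ and $\mathcal H:=\sigma(\mathfrak b)$, and observing that $X$ is $\mathcal G\vee\mathcal H$-measurable because $X=j(\mathfrak a,\mathfrak b)$, the whole statement reduces to proving that for every bounded measurable $\Phi:\mathbb R\to\mathbb R$ one has $\Phi(X)=E[\Phi(X)\mid\mathcal G]$ almost surely.

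The core step is the identity $E[\Phi(X)\mid\mathcal G\vee\mathcal H]=E[\Phi(X)\mid\mathcal G]$, i.e. that $X$ is conditionally independent of $\mathfrak b$ given $\mathfrak a$. I would verify this by testing against a generating family. For bounded measurable $\psi$ and $\chi$ put $\xi:=\psi(\mathfrak a)\chi(\mathfrak b)$; then, using independence of $(X,\mathfrak a)$ from $\mathfrak b$ in the first step, the tower property in the second, and independence of $\mathfrak a$ from $\mathfrak b$ in the third,
\[
E[\Phi(X)\xi]=E[\Phi(X)\psi(\mathfrak a)]\,E[\chi(\mathfrak b)]=E\big[E[\Phi(X)\mid\mathcal G]\,\psi(\mathfrak a)\big]\,E[\chi(\mathfrak b)]=E\big[E[\Phi(X)\mid\mathcal G]\,\xi\big].
\]
A functional monotone class argument extends $E[\Phi(X)\xi]=E[E[\Phi(X)\mid\mathcal G]\,\xi]$ from such products to all bounded $\mathcal G\vee\mathcal H$-measurable $\xi$, which is precisely the assertion $E[\Phi(X)\mid\mathcal G\vee\mathcal H]=E[\Phi(X)\mid\mathcal G]$. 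Since $X$ is $\mathcal G\vee\mathcal H$-measurable, the left-hand side equals $\Phi(X)$, and therefore $\Phi(X)=E[\Phi(X)\mid\mathcal G]$ is $\mathcal G$-measurable.

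Finally I would apply this with an injective bounded Borel function, say $\Phi(x)=\arctan x$: then $\arctan(X)$ is $\mathcal G$-measurable almost surely, and composing with the inverse shows that $X$ is measurable with respect to the $P$-completion of $\mathcal G=\sigma(\mathfrak a)$. This is exactly the form of $\sigma(\mathfrak a)$-measurability used in the application, where $\sigma(\mathfrak a)$ is embedded in the complete field $\mathcal G_t$. I expect the only delicate points to be bookkeeping ones: justifying the monotone class extension (the products $\psi(\mathfrak a)\chi(\mathfrak b)$ generate $\mathcal G\vee\mathcal H$ and their linear span is an algebra of bounded functions) and reading the conclusion up to a $P$-null set, since $j(\mathfrak a,\mathfrak b)$ need not be a genuine pointwise function of $\mathfrak a$.
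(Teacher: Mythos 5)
Your argument is correct and follows essentially the same route as the paper's: both test $j(\mathfrak a,\mathfrak b)$ against products $\psi(\mathfrak a)\chi(\mathfrak b)$, use the independence of $(j(\mathfrak a,\mathfrak b),\mathfrak a)$ from $\mathfrak b$ (and the induced independence of $\mathfrak a$ from $\mathfrak b$) together with a monotone class argument, and reduce to the bounded case via $\arctan$. The only cosmetic difference is that the paper identifies the conditional expectation explicitly as $k(\mathfrak a)=\int_B j(\mathfrak a,b)\,\mu_B(db)$ via Fubini, whereas you work abstractly with $E[\,\cdot\mid\sigma(\mathfrak a)]$; both versions, as you note, yield the conclusion only up to a $P$-null set.
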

\begin{proof} Denote by $\mu_{A}(\cdot)$ (resp. $\mu_{B}(\cdot)$) the
	law of $\mathfrak{a}$ (resp. $\mathfrak{b}$).
	Considering $\arctan\circ j$ instead of $j$, we may and will assume that $j$ is bounded.
	We claim that $j(\mathfrak{a},\mathfrak{b})=k(\mathfrak{b})$ a.s. where $k(a):=\int_B j(a,b) \mu_{B}(db)$.
	Using an argument with monotone classes, it suffices to establish that, for all bounded measurable 
	$m:\,A\to \mathbb{R}$,
	$n:\,B\to\mathbb{R}$, one has $Ej(\mathfrak{a},\mathfrak{b})m(\mathfrak{a})n(\mathfrak{b})=
	Ek(\mathfrak{a})m(\mathfrak{a})n(\mathfrak{b})$.
	By independence of $\mathfrak{a},\mathfrak{b}$ and by definition, the latter expression equals 
	\begin{eqnarray*}
		Ek(\mathfrak{a})m(\mathfrak{a})En(\mathfrak{b})=\int_A \int_B j(a,w)\mu_{B}(dw) m(a)\mu_{A}(da)
		En(\mathfrak{b}) &=&\\
		\int_A \int_B j(a,w)m(a)\mu_{B}(dw)\mu_{A}(da)
		En(\mathfrak{b})=Ej(\mathfrak{a},\mathfrak{b})m(\mathfrak{a})En(\mathfrak{b}) &=& \\
		Ej(\mathfrak{a},\mathfrak{b})m(\mathfrak{a})n(\mathfrak{b}), & &
	\end{eqnarray*}
	by our independence hypothesis. This completes the proof.
\end{proof}

We now recall Th\'eor\`eme 1 of \cite{beksy}. Just like Theorem \ref{altalanos} in Section \ref{tetto}
above, this result is crucial
for the developments of the present paper.

\begin{lemma}\label{fonction} Let $A,B$ be separable metric spaces and $\xi_n\in A$, $n\in\mathbb{N}$ a sequence of 
	random variables
	converging to $\xi\in A$ in probability such that $\mathrm{Law}(\xi_n)$ is the same for all $n$.
	Then for each measurable $h:A\to B$ the random variables $h(\xi_n)$ converge to $h(\xi)$
	in probability (hence also a.s. along a subsequence).
	\hfill $\Box$
\end{lemma}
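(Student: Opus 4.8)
The plan is to exploit the only substantive hypothesis, namely that all the $\xi_n$ (and hence also $\xi$) share a common law, in order to compensate for the fact that $h$ is merely Borel measurable so that the continuous mapping theorem is unavailable. Write $d_A$, $d_B$ for the metrics of $A$, $B$ and let $\mu$ denote the common law of the $\xi_n$. Since convergence in probability implies convergence in distribution and the $\xi_n$ all have law $\mu$, the limit satisfies $\mathrm{Law}(\xi)=\mu$ as well, by uniqueness of weak limits. The device replacing continuity of $h$ is Lusin's theorem: for the finite (hence closed-regular) Borel measure $\mu$ on the separable metric space $A$ and the measurable map $h$ into the separable metric space $B$, for every $\varepsilon>0$ there is a closed set $F\subset A$ with $\mu(A\setminus F)<\varepsilon$ on which $h$ restricts to a continuous function.

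First I would turn the pointwise continuity of $h|_F$ into a quantitative statement. Fix the target accuracy $\delta>0$ and, for $r>0$, consider
$$
F_r:=\Big\{a\in F:\ d_B\big(h(a),h(b)\big)<\delta\ \text{for every}\ b\in F\ \text{with}\ d_A(a,b)<r\Big\}.
$$
As $r\downarrow 0$ the sets $F_r$ increase, and continuity of $h|_F$ at each point shows $\bigcup_{r>0}F_r=F$; hence $\mu(F\setminus F_r)\to 0$ and I may fix $r$ so small that $\mu(F\setminus F_r)<\varepsilon$. Measurability of $F_r$ is where separability is used once more: the oscillation $a\mapsto\sup\{d_B(h(a),h(b)):\ b\in F,\ d_A(a,b)<r\}$ may be computed as a supremum over a fixed countable dense subset of $F$, because $h|_F$ is continuous, so it is a measurable function of $a$.

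The conclusion then comes from a three-way decomposition. By construction, if $\xi\in F_r$, $\xi_n\in F$ and $d_A(\xi_n,\xi)<r$, then $d_B(h(\xi_n),h(\xi))<\delta$; consequently
$$
\{d_B(h(\xi_n),h(\xi))\ge\delta\}\subset\{\xi\notin F_r\}\cup\{\xi_n\notin F\}\cup\{d_A(\xi_n,\xi)\ge r\}.
$$
The crucial point is that, because $\mathrm{Law}(\xi_n)=\mu$ for every $n$, the middle term is controlled \emph{uniformly in $n$} by $P(\xi_n\notin F)=\mu(A\setminus F)<\varepsilon$; the first term is bounded by $\mu(A\setminus F)+\mu(F\setminus F_r)<2\varepsilon$ using $\mathrm{Law}(\xi)=\mu$; and the last term tends to $0$ as $n\to\infty$ by the assumed convergence in probability. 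Letting $n\to\infty$ and then $\varepsilon\to 0$ yields $P(d_B(h(\xi_n),h(\xi))\ge\delta)\to 0$ for each $\delta>0$, i.e. $h(\xi_n)\to h(\xi)$ in probability; a.s.\ convergence along a subsequence is then automatic.

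The main obstacle is conceptual rather than computational: without the common-law hypothesis the statement is false (take $h$ the indicator of a single point and $\xi_n\to\xi$ deterministically onto that point), so the argument must use that hypothesis in an essential way, and it does so precisely in the uniform-in-$n$ estimate $P(\xi_n\notin F)=\mu(A\setminus F)$. The only genuinely technical issue is the measurability of the auxiliary sets $F_r$, which is dispatched by separability as indicated above.
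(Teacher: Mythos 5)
The paper does not prove this lemma at all: it is quoted verbatim as Th\'eor\`eme 1 of the cited reference and stated with an empty proof box, so there is no internal argument to compare yours against. Your proof is correct and self-contained: deducing $\mathrm{Law}(\xi)=\mu$ from convergence in probability, invoking Lusin's theorem (valid here since a finite Borel measure on a metric space is inner regular with respect to closed sets and $B$ is separable), and then running the three-event decomposition in which the identical-law hypothesis gives the uniform-in-$n$ bound $P(\xi_n\notin F)=\mu(A\setminus F)<\varepsilon$ is exactly the standard route to this result, and your counterexample correctly shows that hypothesis cannot be dropped. The only cosmetic point worth tidying is the definition of $F_r$: as written (with the strict inequality ``$d_B(h(a),h(b))<\delta$ for every $b$'') it is sandwiched between the sublevel sets $\{\mathrm{osc}_r<\delta\}$ and $\{\mathrm{osc}_r\le\delta\}$ of your oscillation function rather than equal to one of them; replacing it by $\{a\in F:\ \mathrm{osc}_r(a)<\delta\}$ (or using $\delta/2$ in the defining inequality) makes the measurability claim literally match the set used, and the rest of the argument goes through unchanged.
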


\begin{lemma}\label{transfer} Let the topological space $Z$ be the union of its closed, increasing subspaces $A_n$, $n\in\mathbb{N}$ which 
	are Polish spaces (with appropriate metrics) and let $B$ be a measurable space.
	Let $H,\tilde{H}$ be random elements in $B$ with identical laws, defined on the probability spaces $(\Xi,\mathcal{E},R)$,
	$(\tilde{\Xi},\tilde{\mathcal{E}},\tilde{R})$, respectively. Let $\tilde{\phi}$ be a random element in $Z$, defined on $(\tilde{\Xi},\tilde{\mathcal{E}},\tilde{R})$. 
	Let $U$ be independent of $H$ with uniform law on $[0,1]$. There exists a measurable 
	function $f: B \times [0,1] \to Z$ such that $\phi = f(H, U)$ satisfies $Law_R(H, \phi) = Law_{\tilde{R}}(\tilde{H}, \tilde{\phi})$.
\end{lemma}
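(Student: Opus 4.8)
The plan is to recognize the statement as a \emph{transfer} (or noise-outsourcing) result: given the joint law $\nu:=\mathrm{Law}_{\tilde R}(\tilde H,\tilde\phi)$ on $B\times Z$, whose first marginal coincides with $\mathrm{Law}_R(H)$, I want to reconstruct the $Z$-component from $H$ together with the independent uniform seed $U$. The whole argument hinges on the fact that, although $Z$ carries a possibly non-metrizable topology, its Borel structure $(Z,\mathcal B(Z))$ is \emph{standard Borel}. Once this is established, a regular conditional distribution of $\tilde\phi$ given $\tilde H$ exists and can be sampled measurably from a uniform random variable; feeding $H$ and $U$ into this sampler produces the desired $f$.

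First I would prove that $(Z,\mathcal B(Z))$ is standard Borel. Set $A_0:=\emptyset$ and $D_n:=A_n\setminus A_{n-1}$, $n\in\mathbb N$, a partition of $Z$ into sets that are Borel in $Z$ (each $A_n$ is closed, hence Borel). Since $A_n$ is a closed subspace, the trace identity $\mathcal B(A_n)=\{A_n\cap E:\ E\in\mathcal B(Z)\}$ gives $\mathcal B(A_n)\subseteq\mathcal B(Z)$, so $D_n\in\mathcal B(Z)$ and every Borel subset of $D_n$ lies in $\mathcal B(Z)$. Conversely, for $E\in\mathcal B(Z)$ one has $E\cap D_n=(E\cap A_n)\setminus A_{n-1}\in\mathcal B(A_n)$ for every $n$. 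Hence a set $E\subseteq Z$ belongs to $\mathcal B(Z)$ iff $E\cap D_n\in\mathcal B(A_n)$ for all $n$, which says exactly that $(Z,\mathcal B(Z))$ is the measurable disjoint union $\bigsqcup_n D_n$ of the spaces $(D_n,\mathcal B(A_n)|_{D_n})$. Each $D_n$ is a Borel subset of the Polish space $A_n$, hence standard Borel, and a countable disjoint union of standard Borel spaces is standard Borel; therefore so is $Z$.

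Next I would carry out the transfer. Because $Z$ is standard Borel, the regular conditional distribution of $\tilde\phi$ given $\tilde H$ exists and, by the Doob--Dynkin lemma, can be written as a probability kernel $\kappa:B\times\mathcal B(Z)\to[0,1]$ with $\nu(dh,dz)=\mu(dh)\,\kappa(h,dz)$, where $\mu:=\mathrm{Law}_R(H)=\mathrm{Law}_{\tilde R}(\tilde H)$. Fixing a Borel isomorphism of $Z$ onto a Borel subset of $[0,1]$, I would transport $\kappa$ to a kernel $\kappa'$ on $[0,1]$, form the conditional distribution function $F(h,x):=\kappa'(h,[0,x])$ and its generalized inverse $F^{-1}(h,u):=\inf\{x:\ F(h,x)\ge u\}$, which is jointly measurable in $(h,u)$; composing with the inverse isomorphism yields a measurable $f:B\times[0,1]\to Z$ such that $f(h,U)$ has law $\kappa(h,\cdot)$ whenever $U$ is uniform on $[0,1]$. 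Setting $\phi:=f(H,U)$ and using $U\perp H$, a direct computation gives, for $G\in\mathcal B(B)$ and $C\in\mathcal B(Z)$,
\[
R\big(H\in G,\ \phi\in C\big)=\int_G \kappa(h,C)\,\mu(dh)=\nu(G\times C),
\]
so $\mathrm{Law}_R(H,\phi)=\nu=\mathrm{Law}_{\tilde R}(\tilde H,\tilde\phi)$, as required. (Equivalently, once $Z$ is known to be a Borel space one may simply invoke Kallenberg's transfer theorem.)

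The only genuinely delicate point is the first step: since the spaces of interest---such as $L^\beta$ with the weak topology---are not metrizable, one cannot appeal to Polishness of $Z$ directly, and the standard Borel property must be extracted from the countable closed cover by Polish subspaces. The disintegration and inverse-transform sampling in the second step are then routine; note in particular that $B$ is allowed to be an arbitrary measurable space, because only the \emph{value} space $Z$ enters the conditional-distribution and sampling arguments.
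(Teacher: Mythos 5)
Your proposal is correct and follows essentially the same route as the paper: both arguments first show that $Z$ is a standard Borel space via the partition $A_n\setminus\bigcup_{i<n}A_i$ into Borel subsets of Polish spaces, and then invoke the transfer theorem (Lemma 3.22 of Kallenberg), which you additionally unpack via disintegration and inverse-transform sampling. The extra detail is sound but not a different method.
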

\begin{proof} 
In view of Lemma 3.22 of \cite{k} it suffices to show that $Z$ is a Borel space in the sense
of \cite{k}, i.e. it is
Borel-isomorphic to a Borel subset of $[0,1]$. We define $C_n := A_n \setminus \cup_{i < n}A_i$, $n\geq 0$.
Borel subsets of Polish spaces are clearly Borel spaces, let $\psi_n:C_n\to [1-1/2^{2n+1},1-1/2^{2n+2}]$
be Borel isomorphisms attesting this. Then it is easy to check that $\psi(x):=\psi_n(x)$,
$x\in C_n$ defines a Borel isomorphism between $Z$ and a Borel subset of $[0,1]$.
\end{proof}

\begin{lemma}\label{con}
	The mapping $F:L^{\beta}\times \mathcal{D}^1\to \mathbb{R}$ defined by $F(\psi,\chi):=\int_{[0,1]} \psi(t) \chi(t)\, dt$
	is sequentially continuous and (jointly) measurable when $L^{\beta}$ is equipped with the weak topology.
\end{lemma}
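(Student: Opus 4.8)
The plan is to treat the two assertions separately: first sequential continuity, then joint measurability by an approximation argument.

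For sequential continuity, suppose $\psi_n\to\psi$ weakly in $L^\beta$ and $\chi_n\to\chi$ in the Skorohod topology on $\mathcal{D}^1$. I would first record two consequences of these hypotheses. By Banach--Steinhaus a weakly convergent sequence is norm-bounded, so $\sup_n\|\psi_n\|_\beta<\infty$. On the side of $\chi$, I would invoke the time-change characterization of $J_1$-convergence (Chapter 3 of \cite{billingsley}): there are increasing homeomorphisms $\lambda_n$ of $[0,1]$ with $\sup_t|\lambda_n(t)-t|\to 0$ and $\sup_t|\chi_n(\lambda_n(t))-\chi(t)|\to 0$. This yields, on the one hand, $\sup_n\|\chi_n\|_\infty<\infty$ (composition with the bijection $\lambda_n$ leaves the supremum unchanged, and $\chi$, being c\`adl\`ag on a compact interval, is bounded), and on the other hand $\chi_n(t)\to\chi(t)$ at every continuity point $t$ of $\chi$; since a c\`adl\`ag function has at most countably many discontinuities, $\chi_n\to\chi$ Lebesgue-a.e. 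Being uniformly bounded, dominated convergence then gives $\|\chi_n-\chi\|_\gamma\to 0$. With these two facts I would split
\[
F(\psi_n,\chi_n)-F(\psi,\chi)=\int_0^1\psi_n(\chi_n-\chi)\,dt+\int_0^1(\psi_n-\psi)\chi\,dt;
\]
the first term is bounded by $(\sup_n\|\psi_n\|_\beta)\,\|\chi_n-\chi\|_\gamma\to 0$ via H\"older, while the second tends to $0$ by weak convergence, since $\chi\in L^\infty\subseteq L^\gamma$.

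For measurability, the plan is to exhibit $F$ as a pointwise limit of manifestly measurable functions. Fixing the dyadic partition of $[0,1]$ into intervals $I^n_k=[(k-1)/2^n,k/2^n)$, I would set
\[
F_n(\psi,\chi):=\sum_{k=1}^{2^n}\Big(\int_{I^n_k}\psi(t)\,dt\Big)\,\chi\big((k-1)/2^n\big).
\]
Each summand is the product of a map depending only on $\psi$, namely $\psi\mapsto\int_0^1\psi\,1_{I^n_k}\,dt$, which is weakly continuous because $1_{I^n_k}\in L^\gamma$, and a map depending only on $\chi$, namely the coordinate evaluation $\chi\mapsto\chi((k-1)/2^n)$, which is Borel on $\mathcal{D}^1$ since its Borel field is generated by the coordinate projections (Theorem 12.5 of \cite{billingsley}, cf. Remark \ref{ipszilon}). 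Hence each $F_n$ is measurable with respect to the product of the two Borel $\sigma$-fields, which is contained in the Borel field of the product topology. Finally I would check $F_n\to F$ pointwise: writing $F_n(\psi,\chi)=\int_0^1\psi(t)\chi_n(t)\,dt$ with $\chi_n$ the left-endpoint step approximation of $\chi$, the existence of left limits gives $\chi_n(t)\to\chi(t-)$, which equals $\chi(t)$ off the countable jump set; uniform boundedness and dominated convergence then yield $\chi_n\to\chi$ in $L^\gamma$, whence $\int\psi\chi_n\to\int\psi\chi$ by H\"older for every fixed $\psi$. A pointwise limit of measurable functions being measurable, $F$ is jointly measurable.

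The main obstacle I anticipate is the continuity half, specifically the passage from Skorohod convergence of $\chi_n$ to genuine $L^\gamma$-convergence: $J_1$-convergence permits small time deformations and does not by itself give convergence in any $L^p$ norm, so the time-change representation together with the uniform-boundedness and a.e.-convergence-at-continuity-points facts is essential to control the first term. The measurability half is comparatively routine once one observes that the dyadic Riemann sums separate the two variables into a weakly continuous linear functional of $\psi$ and a coordinate of $\chi$.
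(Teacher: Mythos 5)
Your proof is correct. The sequential-continuity half is essentially the paper's argument: the same splitting of $F(\psi_n,\chi_n)-F(\psi,\chi)$ into two terms, one killed by H\"older together with $\sup_n\|\psi_n\|_\beta<\infty$ and $\|\chi_n-\chi\|_\gamma\to 0$ (dominated convergence from uniform boundedness plus a.e.\ convergence at continuity points), the other by weak convergence against $\chi\in L^\gamma$; you derive the boundedness and pointwise facts from the time-change characterization of $J_1$-convergence where the paper simply cites Theorem 12.3 and p.~124 of \cite{billingsley}, but the content is identical. Where you genuinely diverge is the measurability half. The paper observes that closed balls $B_r$ of $L^\beta$ are weak-star metrizable (separability of $L^\gamma$), so sequential continuity upgrades to continuity on each $B_r\times\mathcal{D}^1$, and measurability on the union follows; this recycles the continuity result and is very short, but leans on the metrizability of bounded sets. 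You instead approximate $F$ pointwise by the dyadic Riemann sums $F_n(\psi,\chi)=\sum_k\bigl(\int_{I^n_k}\psi\bigr)\chi((k-1)/2^n)$, each a finite sum of products of a weakly continuous linear functional of $\psi$ and a Borel coordinate evaluation of $\chi$, and verify $F_n\to F$ via $\chi_n\to\chi(\cdot-)=\chi$ a.e.\ and dominated convergence. This is slightly longer but more elementary and self-contained: it does not use the continuity statement at all, and it delivers measurability with respect to the product $\sigma$-field $\mathcal{B}(L^\beta)\otimes\mathcal{B}(\mathcal{D}^1)$ directly, which is precisely the form needed when composing $F$ with the pair of random variables in the proof of Theorem \ref{main}. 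Both arguments are sound; yours is a legitimate alternative.
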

\begin{proof} Take sequences $\psi_n\to \psi$ in $L^{\beta}$ and $\chi_n\to \chi$ in $\mathcal{D}^1$. 
	Then the sequence $\chi_n$, being relatively
	compact in $\mathcal{D}^1$, is uniformly bounded by a constant $K$ (see Theorem 12.3
	of \cite{billingsley}) 
	and $\chi_n(t)$ tends to $\chi(t)$ at every continuity point $t$ of the latter, in particular outside
	a countable set (see page 124 of \cite{billingsley}). So 
	\begin{eqnarray*}
		\left| \int_{[0,1]} \psi_n(t) \chi_n(t)\, dt-\int_{[0,1]} \psi(t) \chi(t)\, dt\right| &\leq&\\
		\int_{[0,1]} |\psi_n(t) \chi_n(t)-\psi_n(t) \chi(t)|\, dt+
		\int_{[0,1]} |\psi_n(t) \chi(t)-\psi(t) \chi(t)|\, dt &\leq&\\
		\left(\int_{[0,1]} |\psi_n(t)|^{\beta} dt\right)^{1/\beta}\left(\int_{[0,1]} |\chi_n(t)-\chi(t)|^{\gamma}\, dt
		\right)^{1/\gamma} &+&\\
		\int_{[0,1]} |\psi_n(t) \chi(t)-\psi(t) \chi(t)|\, dt.& &
	\end{eqnarray*}
	The first term tends to $0$ as $\psi_n$ is weakly bounded hence also norm bounded in $L^{\beta}$ and Lebesgue's theorem
	applies to $|\chi_n(t)-\chi(t)|^{\gamma}\leq (2K)^{\gamma}$. The second term tends to $0$ by the weak convergence 
	of $\psi_n$ to
	$\psi$ noting that $\chi\in L^{\gamma}$ trivially.
	
	As closed balls with radius $r$ around the origin in $L^{\beta}$ (denoted by $B_r$) are metrizable by 
	the separability of $L^{\gamma}$, sequential continuity implies
	continuity and hence measurability of $F$ restricted to $B_r\times \mathcal{D}^1$ for every $r$,
	which easily implies the measurability of $F$ on the whole of $L^{\beta}\times\mathcal{D}^1$.
\end{proof}

\begin{lemma}\label{cop}
	The mapping $(s,\phi)\in \mathcal{D}^1\times L^{\beta}\to \int_0^1 H(s(t))|\phi(t)|^{\alpha}\, dt\in\mathbb{R}$
	is $\mathcal{B}(\mathcal{D}^1\times L^{\beta})$ measurable when $L^{\beta}$ is equipped
	by the weak topology. 
\end{lemma}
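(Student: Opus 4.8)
The plan is to reduce everything to the \emph{norm} topology on $L^{\beta}$ and then to exhibit the functional as a countable supremum of maps that are genuinely continuous for convergence in measure. The decisive observation is that, since $L^{\beta}=L^{\beta}([0,1],\mathcal{B}([0,1]),Leb)$ with $1<\beta<\infty$ is separable, the Borel $\sigma$-field of the weak topology coincides with that of the norm topology. Indeed, norm-closed balls are convex and norm-closed, hence weakly closed (Mazur's theorem), so they are weakly Borel; every norm-open set is a countable union of such balls by separability, giving norm-Borel $\subseteq$ weak-Borel, while the reverse inclusion is automatic because the weak topology is coarser. Thus it suffices to establish $\mathcal{B}(\mathcal{D}^1)\otimes\mathcal{B}(L^{\beta}_{\mathrm{norm}})$-measurability. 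This reduction is the crux: unlike $F$ in Lemma \ref{con}, which is linear in $\psi$ and therefore weakly sequentially continuous, the present map involves the nonlinear $|\phi|^{\alpha}$ and is \emph{not} weakly continuous, so the argument of Lemma \ref{con} is unavailable; passing to the norm topology is exactly what is needed to make $\phi_n\to\phi$ force $|\phi_n|^{\alpha}\to|\phi|^{\alpha}$ in measure.

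Working now with the norm, I would record two continuity facts. First, if $s_n\to s$ in $\mathcal{D}^1$ then $\sup_n\|s_n\|_{\infty}<\infty$ (Theorem 12.3 of \cite{billingsley}) and $s_n(t)\to s(t)$ for all $t$ outside a countable set (page 124 of \cite{billingsley}); since $H$ is continuous, $H^{\pm}(s_n(\cdot))\to H^{\pm}(s(\cdot))$ almost everywhere and uniformly bounded, where $H^{+}:=\max(H,0)$ and $H^{-}:=\max(-H,0)$. Second, if $\phi_n\to\phi$ in $L^{\beta}$ then $\phi_n\to\phi$ in Lebesgue measure by Chebyshev's inequality, so $|\phi_n(\cdot)|^{\alpha}\wedge k\to|\phi(\cdot)|^{\alpha}\wedge k$ in measure and bounded by $k$, for each fixed $k\in\mathbb{N}$.

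For the assembly, for each $k$ set $\Phi_k^{\pm}(s,\phi):=\int_0^1 H^{\pm}(s(t))\,(|\phi(t)|^{\alpha}\wedge k)\,dt$. Along convergent sequences as above the integrands converge in measure and are uniformly bounded, so bounded convergence gives $\Phi_k^{\pm}(s_n,\phi_n)\to\Phi_k^{\pm}(s,\phi)$; as $\mathcal{D}^1\times(L^{\beta},\mathrm{norm})$ is metrizable, sequential continuity is continuity, hence each $\Phi_k^{\pm}$ is Borel. Since $H^{\pm}\geq 0$ and $|\phi|^{\alpha}\wedge k\uparrow|\phi|^{\alpha}$, monotone convergence yields $\Phi^{\pm}(s,\phi):=\int_0^1 H^{\pm}(s(t))|\phi(t)|^{\alpha}\,dt=\sup_k\Phi_k^{\pm}(s,\phi)$, a countable supremum of Borel functions and therefore Borel into $[0,\infty]$. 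Because $s\in\mathcal{D}^1$ is bounded, $H(s(\cdot))$ is bounded, so the map of the statement equals $\Phi^{+}-\Phi^{-}$; under Assumption \ref{modl} one has $H>0$ along the relevant trajectories, whence $\Phi^{-}\equiv 0$ and the map is the Borel, $[0,\infty]$-valued function $\Phi^{+}$, which is precisely what is invoked in the proof of Theorem \ref{main}. The two places where care is needed, and which I would flag, are the failure of weak continuity (forcing the Borel-field coincidence) and the possibility that the integral equals $+\infty$ (handled cleanly by the monotone truncation in $k$).
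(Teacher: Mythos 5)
Your proof is correct and follows essentially the same route as the paper's: truncate $|\phi|^{\alpha}$ at level $k$, use the coincidence of the weak and norm Borel $\sigma$-fields on the separable space $L^{\beta}$ to reduce to norm-continuity of the truncated functionals, and pass to the limit by monotone convergence. Your additional care with the decomposition $H=H^{+}-H^{-}$ (needed since $H$ is not assumed nonnegative on all of $\mathbb{R}$, only along the price trajectories) and with the possible value $+\infty$ is a mild sharpening of the paper's argument, which cites \cite{pettis} for the Borel-field coincidence where you supply the Mazur-theorem argument directly.
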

\begin{proof}
	By the monotone convergence theorem it is enough to prove the measurability of  
	\begin{equation}\label{guld}
	(s,\phi)\to \int_0^1 H(s(t))(|\phi(t)|^{\alpha}\wedge N)\, dt,
	\end{equation}
	for all $N>0$. Since $L^{\beta}$ is a separable Banach space, by results of \cite{pettis}, Borel sets
	of $L^{\beta}$ for the weak topology coincide with those of the norm topology. So it suffices to
	prove continuity of \eqref{guld} when $L^{\beta}$ is equipped with the norm topology.
	Let $(s_n,\phi_n)\to (s,\phi)$ in $\mathcal{D}^1\times L^{\beta}$. Then $s_n$ are uniformly
	bounded and converge Lebesgue-a.s. to $s$ and $\phi_n$ converge to $\phi$ in Lebesgue measure.
	Dominated convergence implies the convergence of $\int_0^1 H(s_n(t))(|\phi_n(t)|^{\alpha}\wedge N)\, dt$
	to $\int_0^1 H(s(t))(|\phi(t)|^{\alpha}\wedge N)\, dt$ as $n\to\infty$.
\end{proof}

\begin{lemma}\label{versi}
	Let $\phi:\Omega\to L^{\beta}$ be such that $\sigma(\phi 1_{[0,t]})\subset\mathcal{G}_t$ for all $t$. 
	Then there exists $\bar{\phi}(\omega,t)=\phi(\omega)(t)$,
	$P\times Leb$-a.s. such that $\bar{\phi}_t$ is $\mathcal{G}_t$-progressively measurable.
\end{lemma}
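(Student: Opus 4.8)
The plan is to realize $\phi$ as an almost sure limit of explicitly adapted step processes and then to identify the limit pointwise via the Lebesgue differentiation theorem. Since $[0,1]$ carries finite measure and $\beta>1$, Hölder's inequality gives $L^{\beta}\subset L^1$, so $\phi(\omega)\in L^1([0,1])$ for almost every $\omega$ and all the integrals below are finite.

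First I would introduce \emph{lagged} dyadic averages. For $n\in\mathbb{N}$ set $I^n_k:=((k-1)2^{-n},k2^{-n}]$, $k=1,\dots,2^n$, and define $\bar{\phi}_n(\omega,t):=2^n\int_{I^n_{k-1}}\phi(\omega)(s)\,ds$ for $t\in I^n_k$, with $\bar{\phi}_n:=0$ on $[0,2^{-n}]$. Each coefficient equals $\langle \phi(\omega)1_{[0,(k-1)2^{-n}]},1_{I^n_{k-1}}\rangle$, because $I^n_{k-1}\subset[0,(k-1)2^{-n}]$; since $1_{I^n_{k-1}}\in L^{\gamma}=(L^{\beta})'$, the functional $\langle\,\cdot\,,1_{I^n_{k-1}}\rangle$ is weakly continuous, so by the standing hypothesis $\sigma(\phi 1_{[0,(k-1)2^{-n}]})\subset\mathcal{G}_{(k-1)2^{-n}}$ the coefficient is $\mathcal{G}_{(k-1)2^{-n}}$-measurable. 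As $(k-1)2^{-n}\le t$ on $I^n_k$, the process $\bar{\phi}_n$ is adapted, left-continuous and piecewise constant, hence progressively measurable. This is exactly where the one-interval time lag is needed, and exactly where the hypothesis on $\sigma(\phi 1_{[0,t]})$ enters.

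Next I would pass to the limit. For fixed $\omega$, the interval used at $t\in I^n_k$ satisfies $I^n_{k-1}\subset B(t,2^{1-n})$ and $|I^n_{k-1}|=\tfrac14|B(t,2^{1-n})|$, so the family $\{I^n_{k-1}\}_n$ shrinks nicely to $t$ in the Vitali sense. The Lebesgue differentiation theorem in its nicely-shrinking form then gives $\bar{\phi}_n(\omega,t)\to\phi(\omega)(t)$ for Lebesgue-almost every $t$, for each such $\omega$. I would set $\bar{\phi}:=\lim_n\bar{\phi}_n$ on the set $\{(\omega,t):\lim_n\bar{\phi}_n(\omega,t)\text{ exists in }\mathbb{R}\}$ and $\bar{\phi}:=0$ elsewhere. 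This set is progressively measurable, and a pointwise limit of progressively measurable processes (restricted to its convergence set) is again progressively measurable, so $\bar{\phi}$ is $\mathcal{G}_t$-progressively measurable; Fubini together with the a.e.\ convergence shows $\bar{\phi}(\omega,t)=\phi(\omega)(t)$ for $P\times Leb$-almost every $(\omega,t)$, which is the assertion.

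The main obstacle is reconciling two competing demands. Centred or right-endpoint dyadic averages converge to $\phi(\omega)(\cdot)$ most transparently but fail to be adapted, since at a time $t$ they use values of $\phi$ up to the right endpoint $k2^{-n}>t$. Introducing the lag restores adaptedness through the hypothesis on $\sigma(\phi 1_{[0,t]})$, but then the averaging intervals no longer contain $t$, so the textbook form of Lebesgue differentiation does not apply directly; the crux is to check that these displaced intervals still shrink nicely to $t$, which is precisely what makes the Vitali form of the differentiation theorem available and closes the argument.
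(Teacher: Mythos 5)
Your proof is correct and takes essentially the same route as the paper: the paper defines the progressive version via $\check{\phi}(\omega,t):=\limsup_n n\int_{t-1/n}^{t}\phi(\omega)(s)\,ds$, so that each approximant is a (weakly continuous) functional of $\phi 1_{[0,t]}$ and Lebesgue's differentiation theorem identifies the limit $P\otimes Leb$-a.e. Your lagged dyadic averages are only a cosmetic variant --- they make the adaptedness of each approximant slightly more explicit at the price of invoking the nicely-shrinking form of the differentiation theorem --- but the underlying idea (left-sided averages inheriting measurability from $\sigma(\phi 1_{[0,t]})\subset\mathcal{G}_t$, then a.e.\ identification of the limit) is identical.
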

\begin{proof}
	Define $$
	\check{\phi}(\omega,t):=\limsup_n n\int_{t-1/n}^t \phi(\omega)(s)ds,\quad \bar{\phi}(\omega,t):=\check{\phi}(\omega,t)1_{\{\check{\phi}(\omega,t)<\infty\}}.
	$$
	By Lebesgue's differentiation theorem and by measurability of $\omega\to\phi(\omega)\in L^{\beta}$ this is $\mathcal{F}\otimes\mathcal{B}([0,1])$-measurable
	and equals $\phi(\omega)(t)$, $P\otimes Leb$-a.s. By $\sigma(\phi 1_{[0,t]})\subset\mathcal{G}_t$ we get 
	progressive measurability, too.
\end{proof}



We recall the main result of \cite{komlos}, see the Appendix of \cite{kp} for a recent account of the proof.
\begin{lemma}\label{komlos1}
	Let $f_n$ be a sequence of real-valued random variables satisfying $$\sup_n E|f_n|<\infty.$$
	Then there is a subsequence $n_j$, $j\in\mathbb{N}$ and a random variable $f$ 
	such that
	$$
	\widehat{f}_i:=\frac{1}{i}\sum_{j=1}^{i}f_{n_j}\to f,\mbox{ a.s.},i\to\infty.
	$$ 
\end{lemma}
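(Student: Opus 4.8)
The plan is to reconstruct the classical argument of Koml\'os (in the spirit of the account cited from the Appendix of \cite{kp}), organizing it around a subsequence-splitting that isolates a uniformly integrable part, treatable by Hilbert-space methods, from a residual with pairwise disjoint supports, treatable by an elementary Ces\`aro estimate. Note that I only need the existence of one good subsequence, which is somewhat lighter than the full Koml\'os statement (where every further subsequence is required to work as well).

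First I would invoke the Kadec--Pe\l czy\'nski subsequence-splitting lemma, which applies since $\sup_n E|f_n|<\infty$: after passing to a subsequence (relabelled $f_n$) one can write $f_n=g_n+h_n$, where $(g_n)$ is uniformly integrable and $h_n=f_n 1_{A_n}$ is supported on pairwise disjoint sets $A_n$. The residual part then costs nothing. Since the $A_n$ are disjoint, for almost every fixed $\omega$ at most one of $h_1(\omega),\dots,h_N(\omega)$ is non-zero, so $\frac1N\sum_{n=1}^N h_n(\omega)$ is either $0$ or equal to $\frac1N h_m(\omega)$ for a single index $m=m(\omega)$ that does not depend on $N$; in both cases it tends to $0$ as $N\to\infty$, and the same holds verbatim along any further subsequence. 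Hence the Ces\`aro means of $(h_n)$ converge to $0$ almost surely.

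The substance lies in the uniformly integrable part $(g_n)$. By the Dunford--Pettis theorem this sequence is relatively weakly compact in $L^1$, so after a further extraction $g_n$ converges weakly to some $g\in L^1$, and replacing $g_n$ by $g_n-g$ I may assume the weak limit is $0$ (uniform integrability is preserved, as $g$ is a fixed integrable function). To upgrade weak convergence to almost-sure Ces\`aro convergence I would reduce to $L^2$ by truncation: for each level $c$ put $g_n^{c}:=g_n 1_{\{|g_n|\le c\}}-E[g_n 1_{\{|g_n|\le c\}}]$, a sequence bounded in $L^2$ for fixed $c$. The Banach--Saks theorem in the Hilbert space $L^2$ then yields a subsequence whose Ces\`aro means converge in $L^2$-norm, hence almost surely along a yet further subsequence. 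Uniform integrability makes $\sup_n E|g_n-g_n^{c}|$ small for large $c$, so a diagonal extraction over $c=1,2,\dots$, together with a Borel--Cantelli bound on the accumulated small tail masses, produces a single subsequence along which the Ces\`aro means of $(g_n)$ converge almost surely to an integrable limit. Intersecting this subsequence with the one coming from the disjoint part yields a subsequence $n_j$ of the original sequence and an integrable $f$ with $\frac1i\sum_{j=1}^i f_{n_j}\to f$ almost surely, as required.

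The main obstacle is precisely this uniformly integrable step. Uniform integrability controls only $L^1$-tails, whereas the conclusion is an almost-sure statement, so the delicate point is to diagonalize over the truncation levels while simultaneously dominating the accumulated truncation errors (via maximal inequalities and Borel--Cantelli) sharply enough that one and the same subsequence serves every level at once; one must also check that the Ces\`aro averages of the centering constants $E[g_n 1_{\{|g_n|\le c\}}]$ vanish, which follows from $g_n\rightharpoonup 0$ and uniform integrability. By comparison, the disjoint-support reduction and the Banach--Saks input are entirely routine.
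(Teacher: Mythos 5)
The paper offers no proof of this lemma to compare against: it is quoted as the main result of \cite{komlos}, with the Appendix of \cite{kp} cited for a proof. Judged on its own, your architecture --- Kadec--Pe\l czy\'nski subsequence splitting into a uniformly integrable part and a disjointly supported part, with the disjoint part killed by the $1/N$ factor --- is the standard modern route, and your treatment of the disjointly supported residual is correct and complete. The uniformly integrable part, however, is exactly where the theorem lives, and there the proposal has two genuine gaps rather than a proof.

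First, the step ``Banach--Saks yields a subsequence whose Ces\`aro means converge in $L^2$-norm, hence almost surely along a yet further subsequence'' does not deliver what the lemma requires: a further extraction gives $\widehat{g}^{\,c}_{N_j}\to\gamma_c$ a.s. only along some $N_j\to\infty$, whereas you need a.s. convergence of the \emph{entire} sequence of Ces\`aro means for one fixed subsequence of indices. The standard repair is quantitative (select an almost-orthogonal subsequence with $|\langle g_{n_k},g_{n_l}\rangle_{L^2}|\le 2^{-k}$ for $l<k$, deduce $E(\widehat{g}^{\,c}_N)^2=O(1/N)$, apply Chebyshev and Borel--Cantelli along $N=j^2$, and control the intermediate blocks by a maximal estimate); none of this appears in the sketch. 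Second, and more seriously, the fixed-level truncation with a diagonal over $c=1,2,\dots$ cannot be closed as described. Uniform integrability gives $\sup_n E\bigl[|g_n|1_{\{|g_n|>c\}}\bigr]\le\varepsilon(c)\to 0$, so the Ces\`aro means of the truncation errors are small in $L^1$ uniformly in $N$; but that does not control $\limsup_N \frac1N\sum_{k\le N}|g_{n_k}|1_{\{|g_{n_k}|>c\}}$ almost surely (Fatou only bounds the $\liminf$). For example, $g_n=n\,1_{B_n}$ with independent $B_n$, $P(B_n)=1/(n\log n)$, is uniformly integrable with $E|g_n|\to 0$, yet $\limsup_N\frac1N\sum_{n\le N}|g_n|1_{\{|g_n|>c\}}\ge 1$ a.s.\ for every fixed $c$; only a suitably sparse subsequence cures this, and your extractions (Banach--Saks plus the diagonal over $c$) do not address it. The actual proofs truncate at an index-dependent level and rest on the key selection estimates $\sum_k P(|g_{n_k}|>k)<\infty$ and $\sum_k k^{-2}E\bigl[g_{n_k}^2 1_{\{|g_{n_k}|\le k\}}\bigr]<\infty$ along a suitable subsequence; this is the technical heart of Koml\'os's theorem and is the missing idea here. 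In short, the plan is the right one, but the decisive step is a description of what must be proved rather than an argument.
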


We will need an easy corollary of the above lemma, used in the proof of Theorem \ref{main}.

\begin{corollary}\label{komlos} Let $f_n:O\times [0,1]\to\mathbb{R}$, $n\in\mathbb{N}$
	be $\mathcal{H}\otimes\mathcal{B}([0,1])$-measurable such that 
	$$
	J:=\sup_n \int_0^1 |f_n(\omega,t)|\, dt<\infty
	$$
	almost surely. Then there is a subsequence $n_j$, $j\in\mathbb{N}$ and $f:O\times [0,1]\to\mathbb{R}$ 
	such that
	$$
	\widehat{f}_i:=\frac{1}{i}\sum_{j=1}^{i}f_{n_j}\to f,\ Q\otimes Leb\mbox{-a.s.},\ i\to\infty.
	$$ 
\end{corollary}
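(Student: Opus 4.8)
The plan is to apply Komlós' lemma (Lemma \ref{komlos1}) on the product probability space $(O\times[0,1],\mathcal{H}\otimes\mathcal{B}([0,1]),Q\otimes Leb)$. The only thing preventing a direct application is that Lemma \ref{komlos1} demands a uniform bound $\sup_n E|f_n|<\infty$, whereas our hypothesis merely provides the a.s. finiteness of $J(\omega):=\sup_n\int_0^1|f_n(\omega,t)|\,dt$, with no uniform control in $\omega$; indeed, $E_{Q\otimes Leb}|f_n|$ may well be infinite. I would remove this discrepancy by a change of measure on $O$ that tames the $\omega$-dependence while preserving null sets.

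First I would observe that, by Tonelli's theorem applied to $|f_n|\geq 0$, each map $\omega\mapsto\int_0^1|f_n(\omega,t)|\,dt$ is $\mathcal{H}$-measurable, hence so is the countable supremum $J$, and $J<\infty$ $Q$-a.s. by hypothesis. I then set $g(\omega):=1+J(\omega)$ and $\rho:=1/g$, a strictly positive, $\mathcal{H}$-measurable function bounded by $1$. Consequently $Z:=\int_O\rho\,dQ\in(0,1]$, and $dQ':=(\rho/Z)\,dQ$ defines a probability measure on $\mathcal{H}$ that is equivalent to $Q$.

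The key computation is that under $Q'\otimes Leb$ the sequence $f_n$ acquires a uniform $L^1$ bound on the product space. Since $\int_0^1|f_n(\omega,t)|\,dt\le J(\omega)\le g(\omega)$, one has
\begin{equation*}
E_{Q'\otimes Leb}|f_n|=\frac{1}{Z}\int_O\left(\int_0^1|f_n(\omega,t)|\,dt\right)\frac{1}{g(\omega)}\,Q(d\omega)\le\frac{1}{Z},
\end{equation*}
uniformly in $n$. Thus $\sup_n E_{Q'\otimes Leb}|f_n|<\infty$, and Lemma \ref{komlos1}, applied on $(O\times[0,1],\mathcal{H}\otimes\mathcal{B}([0,1]),Q'\otimes Leb)$, furnishes a subsequence $n_j$ and a limit $f$ with $\widehat{f}_i=\frac{1}{i}\sum_{j=1}^i f_{n_j}\to f$ holding $Q'\otimes Leb$-a.s.

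Finally, because $\rho>0$ everywhere, $Q'$ and $Q$ are mutually absolutely continuous, so $Q'\otimes Leb$ and $Q\otimes Leb$ share exactly the same null sets; the convergence therefore also holds $Q\otimes Leb$-a.s., which is the desired conclusion. I do not expect any genuine obstacle here: the sole subtlety worth isolating is that a product-space $L^1$ bound cannot be had directly from the a.s. hypothesis, and that the single weight $1/(1+J)$ simultaneously restores uniform integrability and leaves the null sets unchanged, so that the conclusion transfers back to the original measure at no cost.
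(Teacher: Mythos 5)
Your proof is correct and follows essentially the same strategy as the paper: an equivalent change of measure weighted by a function of $J$ (the paper uses the density $e^{-J}/Ee^{-J}$ on the product space, you use $(1+J)^{-1}$ normalized on $O$) to obtain the uniform $L^1$ bound required by Lemma \ref{komlos1}, followed by the observation that equivalent measures share null sets. The choice of weight is immaterial; both yield $\sup_n E|f_n|<\infty$ under the new measure.
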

\begin{proof} Define $d\mu/d(Q\otimes Leb):=e^{-J}/Ee^{-J}$. Under $\mu$, Lemma \ref{komlos1} applies 
	to the sequence $f_n$ 
	so we get $\widehat{f}_i$ converging to $f$, $\mu$-a.s. Since $\mu\sim Q\otimes Leb$, this completes the proof.
\end{proof}

\section*{Acknowledgments}
The authors thank the anonymous referees for their useful comments and gratefully acknowledge the support of the ``Lend\"ulet'' grant LP 2015-6 of the
Hungarian Academy of Sciences.

\bibliographystyle{siamplain}
\bibliography{references}
\end{document}